\documentclass[11pt]{article}

\DeclareUnicodeCharacter{0301}{\'e}
\usepackage[margin=1in]{geometry}
\usepackage{multirow}
\usepackage{amssymb,amsmath,amsthm,amsfonts}
\usepackage{bm}
\usepackage{textgreek}
\usepackage{mathtools}
\usepackage{enumitem}
\usepackage[numbers,comma,sort&compress]{natbib}
\usepackage{authblk}
\usepackage{graphicx}
\usepackage[font=small]{caption}
\captionsetup[table]{font={stretch=1.1}}

\usepackage{subcaption} 
\usepackage{tablefootnote} 
\usepackage{float}
\usepackage{placeins}
\usepackage[ruled,vlined,linesnumbered]{algorithm2e}
\usepackage{algorithmic}

\usepackage{physics}
\usepackage{footnote}
\usepackage{xcolor}
\usepackage{mathrsfs}
\usepackage{bbm}
\usepackage{makecell}
\usepackage{pbox}
\usepackage[colorlinks]{hyperref}
\hypersetup{citecolor=blue}
\captionsetup[figure]{labelfont=bf}
 
\newtheorem{theorem}{Theorem}
\newtheorem{definition}{Definition}
\newtheorem{lemma}{Lemma}

\newtheorem{assumption}{Assumption}

\newcommand{\eq}[1]{(\ref{eq:#1})}

\newcommand{\thm}[1]{\hyperref[thm:#1]{Theorem~\ref*{thm:#1}}}
\newcommand{\cor}[1]{\hyperref[cor:#1]{Corollary~\ref*{cor:#1}}}
\newcommand{\defn}[1]{\hyperref[defn:#1]{Definition~\ref*{defn:#1}}}
\newcommand{\lem}[1]{\texorpdfstring{\hyperref[lem:#1]{Lemma~\ref*{lem:#1}}}{Lemma~\ref*{lem:#1}}}
\newcommand{\prop}[1]{\hyperref[prop:#1]{Proposition~\ref*{prop:#1}}}
\newcommand{\assum}[1]{\hyperref[assum:#1]{Assumption~\ref*{assum:#1}}}
\newcommand{\fig}[1]{\hyperref[fig:#1]{Figure~\ref*{fig:#1}}}
\newcommand{\tab}[1]{\hyperref[tab:#1]{Table~\ref*{tab:#1}}}
\newcommand{\algo}[1]{\hyperref[algo:#1]{Algorithm~\ref*{algo:#1}}}
\renewcommand{\sec}[1]{\hyperref[sec:#1]{Section~\ref*{sec:#1}}}
\newcommand{\append}[1]{\hyperref[append:#1]{Appendix~\ref*{append:#1}}}
\newcommand{\fac}[1]{\hyperref[fac:#1]{Fact~\ref*{fac:#1}}}
\newcommand{\lin}[1]{\hyperref[lin:#1]{Line~\ref*{lin:#1}}}

\renewcommand{\arraystretch}{1.25}

\def\>{\rangle}
\def\<{\langle}

\newcommand{\ex}{\mathrm{ex}}

\usepackage{cleveref}
\usepackage[utf8]{inputenc}
\usepackage{authblk}

\begin{document}

\title{Time-Dependent Low-Energy Simulation \\ Accelerates Adiabatic State Preparation}
\author[1]{Shuo Zhou}
\author[2]{Zhaokai Pan}

\author[3]{Weiyuan Gong}

\author[1]{Tongyang Li\thanks{Corresponding author. Email: \href{mailto:tongyangli@pku.edu.cn}{tongyangli@pku.edu.cn}}}

\affil[1]{CFCS, School of Computer Science, Peking University}
\affil[2]{IIIS, Tsinghua University}
\affil[3]{SEAS, Harvard University}


\date{}
\maketitle

\begin{abstract}
Hamiltonian simulations are key subroutines in adiabatic quantum computation and quantum many-body physics, where quantum dynamics often happen in the low-energy sector. Previous studies have shown that the low-energy assumption can reduce the resource requirements of standard time-independent Hamiltonian simulation algorithms. However, whether such advantages extend to \emph{time-dependent} Hamiltonian simulation remains open.
In this paper, we consider the adiabatic regime where the relevant low-energy subspace is spanned by a fixed number of low-energy eigenstates and separated from the rest of the spectrum by a gap.
We show that, for simulating spin Hamiltonians by product formulas, the explicit system size dependence in the leading commutator-scaling term can be replaced by a low-energy scale up to logarithmic factors. 
Technically, we derive the low-energy simulation error with commutator scaling for product formulas by leveraging adiabatic perturbation theory to analyze the time-variant energy spectrum of the underlying Hamiltonian.
We further conduct numerical experiments on adiabatic state preparation of an illustrative example system to support our theoretical findings.
Finally, we prove a lower bound of query complexity for generic time-dependent Hamiltonian simulations.
\end{abstract}

\section{Introduction}
Simulating the dynamics of quantum systems governed by an underlying Hamiltonian is one of the primary applications of quantum computers~\cite{feynman1982simulating}. 
Since the quantum simulation algorithms were first studied by Lloyd~\cite{doi:10.1126/science.273.5278.1073} using product formulas~\cite{suzuki1985decomposition,suzuki1990fractal,suzuki1991general}, numerous advanced techniques~\cite{berry2007efficient, childs2012hamiltonian, berry2015simulating, berry2014exponential, Berry_2015, Low_2017, low2019hamiltonian} have been developed for time-independent Hamiltonian simulations. 
However, simulations of time-dependent Hamiltonians have been significantly less studied, with techniques ranging from time-dependent product formulas~\cite{JHuyghebaert_1990,Wiebe_2010}, Dyson series~\cite{low2019hamiltoniansimulationinteractionpicture,Berry_2020} to Magnus expansion~\cite{fang2025highordermagnusexpansionhamiltonian}, etc. 
Given an initial state $\ket{\psi(0)}$, the time evolution is governed by the Schr\"{o}dinger's equation: 
\begin{align}\label{eq:Schr}
i\partial_t \ket{\psi(t)} = H(t) \ket{\psi(t)},~t\in [0,T].
\end{align}
Our goal is to compute the quantum state $\ket{\psi(T)}$ at the final stage. 
To achieve this, we construct a quantum circuit to approximate the exact evolution operator with the time-ordering operator $\mathcal{T}$:
\begin{align}
V(T,0)=\mathcal{T}\exp\left(-i\int_0^TH(\tau)\dd{\tau}\right).
\end{align}

The applications of time-dependent Hamiltonian simulations range from adiabatic quantum computation~\cite{farhi2000quantumcomputationadiabaticevolution} and quantum control~\cite{Dong_2010}, to quantum many-body physics~\cite{Aspuru_Guzik_2005}, where quantum dynamics often take place in the low-energy sector. 
In the simpler case where the Hamiltonian is time-independent, this problem has attracted much attention from the community. 
Former studies~\cite{csahinouglu2021hamiltonian, Gong_2024, hejazi2024better} have shown that when the initial state is only supported on a low-energy subspace, the simulation complexity of product formulas~\cite{doi:10.1126/science.273.5278.1073,campbell2019random,childs2019faster,tran2021faster} can be significantly improved. 
Ref.~\cite{Zlokapa_2024} further proposed low-energy simulation algorithms based on quantum singular value transformation~\cite{Gily_n_2019}. 
Recently, Ref.~\cite{Mizuta_2025} substantially improved the low-energy simulation analysis for product formulas with commutator scaling~\cite{childs2021theory}. 
However, the understanding of quantum simulation algorithms for time-dependent Hamiltonians under the low-energy assumptions is limited.

Among these approaches, we focus on time-dependent product formulas. Given a Hamiltonian in the form of a summation $H(t)=\sum_{\gamma=1}^\Gamma H_\gamma(t)$, the first-order Trotterization decomposes $T$ into equidistant steps $\delta=T/r$, and then implements the product of (time-ordered) exponentials of $H_\gamma(t)$.
For large enough $r$, we have
\begin{align*}
V(T,0)\approx \Pi_{j=1}^r \Pi_{\gamma=1}^{\Gamma}\mathcal{T}e^{-i\int_{(j-1)\delta}^{j\delta} H_\gamma(\tau)\dd{\tau}}.
\end{align*}
We can further generalize the product formula to $(p\geq 2)$-th orders using the recursion rule~\cite{suzuki1990fractal, suzuki1991general}.
There are two different common choices of product formula with similar properties in each time step in recent literature~\cite{JHuyghebaert_1990, Wiebe_2010}, known as generalized and standard time-dependent product formulas.
The main challenge for implementing product formulas is to decide a proper $r$ to guarantee that the simulation error is below the target threshold $\epsilon$.

This paper addresses this gap for smooth adiabatic evolutions of low-energy states. The analysis combines commutator-scaling product-formula bounds with an adiabatic leakage estimate, allowing us to follow the changing instantaneous low-energy subspace throughout the evolution.

Unlike the time-independent case, the exact evolution operator with time-ordered exponential no longer commutes with the low-energy subspace projector. 
We cannot adopt the analysis framework in the time-independent case, splitting the error into projected error and effective low-energy norm. 
To address this issue, we define the relevant low-energy sector by a fixed number of instantaneous eigenstates and use adiabatic perturbation theory as a key subroutine to compute the error terms in the analysis framework.
The resulting bounds apply naturally to adiabatic state preparation and give a low-energy product-formula guarantee for a broad class of smooth time-dependent spin Hamiltonians.

\subsection{Main results}
We now state the main Trotter-number guarantee in the adiabatic setting, given the gap and smoothness assumptions that are natural in various quantum systems.
Our task is to simulate an $N$-qubit, $k$-local multi-linear Hamiltonian, which has a fixed structure and time-variant coefficients (see Eq.~\eqref{eq:multilinear} for the formal definition), by product formulas when the initial state is only supported on a low-energy subspace spanned by $\sigma$ eigenstates. 
We assume that this low-energy subspace has an energy spectrum within the regime $[0,\Delta]$ over time $[0,T]$, and is separated from the remaining spectrum by a gap $\gamma$. 
We also assume that the Hamiltonian has bounded derivatives $\dot{H}\coloneqq \tfrac{\dd{H}}{\dd{s}}$ and $\ddot{H}\coloneqq \tfrac{\dd^2{H}}{\dd{s^2}}$ with respect to the scaled time $s=t/T$, and its interaction strength on every single qubit is upper bounded by $g$ in \eqref{eq:g}. 
We present our main results in the following \thm{main}, whose formal description is given by \thm{atrotternumber} in \sec{application}. We summarize our results in \tab{result_table}.

\begin{theorem}[Informal]\label{thm:main} Consider a smooth adiabatic interpolation whose evolving low-energy subspace remains separated from the rest of the spectrum by a gap $\gamma$. For all evolution times $T = \Omega(\frac{gN}{\Delta}\cdot(\sigma\frac{\|{\dot{H}}\|}{\gamma^2}+\sigma\sqrt{\sigma} \frac{\|{\dot{H}}\|^2}{\gamma^3}+\sigma\frac{\|{\ddot{H}}\|}{\gamma^2}))$, the Trotter number of both $p$-th order generalized and standard time-dependent product formulas is
\begin{equation*}
     r = \tilde{O}\left(g\frac{(\Delta+g\log(N/\epsilon))^{1/p}T^{1+1/p}}{\epsilon^{1/p}}\right),
\end{equation*}
for initial states supported in the low-energy subspace. 
\end{theorem}

\noindent Compared with the full Hilbert space, the dependence on the system size $N$ is partially substituted by the low-energy threshold supremum $\Delta$, as seen in the first term of $r$. 
We obtain the maximum improvement compared to simulations in the full Hilbert space when the low-energy scale $\Delta+g\log(N/\epsilon)$ is much smaller than the full extensive scale $gN$.
We remark that when $\dot{H}=0$, the requirement on $T$ reduces to a tautology, reproducing the time-independent results~\cite{Mizuta_2025}, which is self-contained with the fact that time-independent Hamiltonians are a special case of time-dependent Hamiltonians. 

We also prove a lower bound compatible with this fact that (see \thm{lower} in \sec{lower}), in the full Hilbert space, any quantum algorithm for generic time-dependent Hamiltonian simulations requires query complexity
\begin{align*}
\Omega\left(d\int_0^T\norm{H(\tau)}_{\max}\dd{\tau}+\frac{\log(1/\epsilon)}{\log\log(1/\epsilon)}\right).
\end{align*}

\begin{table}[!htbp]
\centering
\resizebox{1.0\columnwidth}{!}{
\renewcommand{\arraystretch}{2}
\begin{tabular}{ccc}
\hline System & Full Hilbert space~\cite{mizuta2024expliciterrorboundscommutator} & Low-energy subspace \\
\hline 
geometrically-local Hamiltonian & $O\left(\frac{N^{1/p}(gT)^{1+1/p}}{\epsilon^{1/p}}\right)$ & $\tilde{O}\left(g\frac{(\Delta+g\log(N/\epsilon))^{1/p}T^{1+1/p}}{\epsilon^{1/p}}\right)$ \\
power-law Hamiltonian & ${O}\left(\frac{N^{1/p}(gT)^{1+1/p}}{\epsilon^{1/p}}\log(N)\right)$ & $\tilde{O}\left(g\frac{(\Delta+g\log(N/\epsilon))^{1/p}T^{1+1/p}}{\epsilon^{1/p}}\right)$ \\
\hline
\end{tabular}}
\caption{A summary of low-energy simulation complexity (Trotter number) for different $k$-local Hamiltonians in the adiabatic regime where $T=\Omega(\tfrac{gN}{\Delta}{(\sigma\tfrac{\|{\dot{H}}\|}{\gamma^2}+\sigma\sqrt{\sigma} \tfrac{\|{\dot{H}}\|^2}{\gamma^3}+\sigma\tfrac{\|{\ddot{H}}\|}{\gamma^2})})$. 
For geometrically-local Hamiltonians, $g=O(1)$ can be omitted. 
For power-law Hamiltonians, $g=O(1),~ O(\log N)$, and $O(N^{1-\alpha / D})$ for $\alpha>D, \alpha=D$, and $\alpha<D$, where $D$ is the space dimension of the lattice. 
The improvement could be interpreted as replacing the full extensive energy scale by the effective low-energy scale in regimes where $\Delta+g\log(N/\epsilon)\ll gN$.}
\label{tab:result_table}
\end{table}

\vspace{-1em}
\subsection{Techniques}
We next outline the main idea of the proof ingredients.
In the following, we refer to the simulation error for each time step $\delta$ as the short-time simulation error, and the whole simulation error over $T$ as the long-time simulation error.

\paragraph{Short-time simulation error.} To derive an upper bound with commutator scaling for the short-time simulation error, we first recall the time-independent case~\cite{childs2021theory}. 
According to the variation-of-parameters formula, the $p$-th order time-independent product formula $S_p(\delta)$ and the exact evolution operator $V(\delta)=e^{-iH\delta}$ satisfy
\begin{align*}
S_p(\delta)-V(\delta)=\int_0^\delta\dd \tau e^{-iH(\delta-\tau)}\left(\frac{\dd}{\dd \tau}S_p(\tau)+iHS_p(\tau)\right)=\int_0^\delta\dd \tau e^{-iH(\delta-\tau)}S_p(\tau)\Delta(\tau),
\end{align*}
where $\Delta(\tau)=S_p^\dagger(\tau)(\frac{\dd}{\dd \tau}S_p(\tau)+iHS_p(\tau))$ can be written as the form $e^{\tau A_1} e^{\tau A_2}\cdots  e^{\tau A_q} B e^{-\tau A_q} \cdots e^{-\tau A_2}  e^{-\tau A_1}$ for some Hermitian operators $A_1,A_2,...,A_q$, and can be further expanded as nested commutators. 
Recently, Ref.~\cite{Mizuta_2025} significantly improved the analysis of the low-energy simulation error by directly appending the low-energy projector $\Pi$ at both sides of the above equation:
\begin{gather*}
(S_p(\delta)-V(\delta))\Pi =\int_0^\delta\dd \tau e^{-iH(\delta-\tau)}S_p(\tau)\Delta(\tau)\Pi,\\
\norm{(S_p(\delta)-V(\delta))\Pi} \leq\int_0^\delta\dd{\tau}\norm{\Delta(\tau)\Pi}.
\end{gather*}
The order condition indicates that $\Delta(\tau)=O(\tau^p)$, so we only need to reserve the remainder in its $(p-1)$-th order Taylor expansion.
However, in this way $\norm{\Delta(\tau)\Pi}$ takes the following form:
\begin{align*}
\norm{e^{iH_\gamma}\cdot \text{commutator}\cdot e^{-iH_\gamma}\cdot \Pi}\neq \norm{\text{commutator}\cdot \Pi}
\end{align*}
for some commutator, which forbids us to apply the projection lemmas to the nested commutator. 
As a simple illustrative example, we consider the $(p-1)$-th order Taylor expansion of $e^{\tau A}Be^{-\tau A}$:
\begin{align*}
e^{\tau A}Be^{-\tau A} = B+\tau\operatorname{ad}_AB+\frac{\tau^2}{2!}\operatorname{ad}_A^2B+\cdots + \frac{\tau^{p-1}}{(p-1)!}\operatorname{ad}_A^{p-1}B + \underbrace{\int_0^\tau \dd{\tau}_1 \frac{\tau_1^{p-1}}{(p-1)!} e^{(\tau-\tau_1)A}\operatorname{ad}_A^{p}Be^{-(\tau-\tau_1)A}}_{O(\norm{[A,\dots,[A,B]]}\cdot \tau^p)}.
\end{align*}
It can be observed that all the lower-order terms only contain commutators. 
However, the commutator in the integral remainder is sandwiched by $e^{\tau A}$ if we only expand the Taylor series to the $(p-1)$-th order.
To address this issue, we expand the Taylor series to higher orders to calculate the benefit to the leading-order term from simulating only states in the low-energy subspace.

In this work, we leverage the Floquet theory~\cite{mizuta2024expliciterrorboundscommutator} to calculate a time-dependent analogue to the residual term $\Delta(\tau)$ above. 
We append the low-energy projector to its high-order Taylor series, and compute upper bounds on different terms using a series of projection lemmas as tools.

\paragraph{Long-time simulation error.} To illustrate why we cannot simply decompose the long-time simulation error into sum of each step, we recall the proof of triangle inequality for telescoping operator products: 
Given unitaries $\{U_i\}_{i=1}^r$ and $\{V_i\}_{i=1}^r$, $\norm{U_1U_2\cdots U_r-V_1V_2\cdots V_r}\leq\sum_{i=1}^r\norm{U_i-V_i}$ since
\begin{align*}
\norm{U_1U_2\cdots U_r-V_1V_2\cdots V_r} &\leq \norm{U_1U_2\cdots U_r-U_1V_2\cdots V_r}+\norm{U_1V_2\cdots V_r-V_1V_2\cdots V_r}\\
& = \norm{U_1(U_2\cdots U_r-V_2\cdots V_r)}+ \norm{(U_1-V_1)V_2\cdots V_r} \\
& \leq \sum_{i=1}^r\norm{U_1\cdots U_{i-1}(U_i-V_i)V_{i+1}\cdots V_r}=\sum_{i=1}^r\norm{U_i-V_i}.
\end{align*}
However, if we append an additional operator such as the projector $\Pi$, we can only derive
\begin{align*}
    \norm{(U_1U_2\cdots U_r-V_1V_2\cdots V_r)\Pi} \leq \sum_{i=1}^r\norm{U_1\cdots U_{i-1}(U_i-V_i)V_{i+1}\cdots V_r\Pi}\neq \sum_{i=1}^r\norm{(U_i-V_i)\Pi}.
\end{align*}
In general, the last two expressions do not equal unless $[V_i,\Pi]=0$ or $[U_i,\Pi]=0$. 
On the other hand, note that in adiabatic perturbation theory, there are upper bounds for the form of $\norm{(I-\Pi)V\Pi}$. Therefore, intuitively, we can split the long-time simulation error into the sum of short-time simulation errors plus additional leakage errors, that is,
\begin{align*}
    \sum_{i=1}^r\norm{U_1\cdots U_{i-1}(U_i-V_i)V_{i+1}\cdots V_r\Pi} & = \sum_{i=1}^r\norm{U_1\cdots U_{i-1}(U_i-V_i)(I-\Pi+\Pi)V_{i+1}\cdots V_r\Pi}\\
    & \leq \sum_{i=1}^r\left(\norm{(U_i-V_i)\Pi}+\norm{(U_i-V_i)(I-\Pi)V_{i+1}\cdots V_r\Pi}\right)\\
    & \leq \sum_{i=1}^r\left(\norm{(U_i-V_i)\Pi}+\norm{U_i-V_i}\cdot\norm{(I-\Pi)V_{i+1}\cdots V_r\Pi}\right).
\end{align*}

\paragraph{Lower bound for query complexity.} Time-independent Hamiltonians can be regarded as a subset of time-dependent Hamiltonians. 
Therefore, any generic quantum algorithm for time-dependent Hamiltonian simulations also applies to time-independent Hamiltonian simulations, and thus cannot violate the existing lower bound concerning the time-independent case~\cite{Berry_2015}.

Among the parameters for sparse Hamiltonian simulations, we keep the sparsity $d$ unchanged and consider the time-dependent variant of the max-norm $\norm{H(t)}_{\max}$, such that the derived lower bound is compatible with the above arguments. We take a linear time-dependent Hamiltonian as the hard instance, and construct a reduction from its simulation to string parity computation, which has a known lower bound on both probabilistic Turing machine and quantum Turing machine~\cite{beals1998quantumlowerboundspolynomials,farhi1998limit}.

\subsection{Open questions}\label{sec:open}
Our paper leaves several open questions for future investigation:
\begin{itemize}
     \item \textbf{Interaction picture.} In this work, we focus on product formulas in the Schr\"{o}dinger picture. It is natural to ask whether the low-energy assumption can also enhance interaction picture Hamiltonian simulations~\cite{low2019hamiltoniansimulationinteractionpicture} as another frequently considered time-dependent scenario.

     \item \textbf{$L^{1}$-norm scaling.} Our error derivation is based on~\cite{mizuta2024expliciterrorboundscommutator} which bounds the nested commutator using max-norm. Is there possibility to improve our max-norm scaling to  $L^{1}$-norm scaling as the continuous qDRIFT algorithm and the rescaled Dyson series algorithm~\cite{Berry_2020}. 
     
    \item \textbf{Non-smooth Hamiltonian simulation.} The time-dependent product formulas require appropriate smooth conditions to achieve the desired error scaling. On the other hand, the current qubitization approaches~\cite{Mizuta_2023} for time-dependent Hamiltonians are also restricted to similar prerequisites. Inspired by this, can we prove a tighter lower bound for non-smooth Hamiltonian simulations, which may indicate that some algorithms are already optimal for generic cases?
\end{itemize}

\section{Preliminaries}\label{sec:pre}
This section presents the notation and modeling assumptions used throughout the proof. We first define the multi-linear local Hamiltonian input model and the time-dependent low-energy projector. We then explain the reasons why the usual fixed-threshold definition of a low-energy subspace is inconvenient for time-dependent spectra, and why a fixed number of instantaneous low-energy eigenstates is the more suitable assumption for adiabatic evolution. Finally, we introduce the projection and product-formula ingredients used in the error analysis.

\subsection{Notations}\label{sec:notation}
Throughout this work, we use the notation $\tilde{O}(\cdot)$, which omits the polylogarithmic dependence on the parameters. We summarize the notations in \tab{notations}.
\begin{table}[!htbp]
\centering
\resizebox{1.0\columnwidth}{!}{
{\begin{tabular}{ll|ll}
\hline
Symbol & Definition & Symbol & Definition \\ 
\hline
\hline
$A$ & Operator to apply projection & $R_A$ & Interaction strength on $A$ \\ 
$N$ & Number of spins & $T$ & Total evolution time\\
$\epsilon$ & Target simulation error & $r$ & Trotter number \\
$\Pi_{\sigma}$ & Spectral projector & $\delta$ & $T/r$, unit time step \\
$ \sigma $ & Number of eigenstates & $\gamma$ & Spectral gap between $\Pi_\sigma$ and $I-\Pi_\sigma$\\
$\Delta$ & Low-energy threshold & $\Delta'$ & $\geq\Delta$, effective low-energy norm \\
$s$ & $t/T\in[0,1]$, scaled time & $k$ & $k$-local Hamiltonian \\
$g$ & Interaction strength on a single spin & $\lambda$ &$(2gk)^{-1}$, abbreviation\\
$p$ & Order of product formula & $q$ & Terms number of product formula\\
$V$ & Exact evolution operator & $U$ & Unitary operator to approximate $V$\\ 
\hline
\end{tabular}}
}
\caption{{The notation table.}}
\label{tab:notations}
\end{table}

\subsection{Model and setup}\label{sec:model}
 In this work, we consider the following $k$-local \emph{multi-linear} Hamiltonian defined on lattice $\Lambda$:
\begin{align}\label{eq:multilinear}
H(t)=\sum_{X\subset\Lambda}f_X(t)h_X,
\end{align}
where the local interaction terms $h_X=0$ for $|X|>k$. In the implementation of product formulas, we organize as follows:
\begin{equation}\label{eq:split}
H(t)=\sum_{\gamma=1}^\Gamma  H_{\gamma}(t),
\end{equation}
where each $H_{\gamma}(t)$ can be efficiently exponentiated on a quantum computer, that is, the local interaction terms in it commute with each other. 
For instance, for a 1D spin chain with nearest-neighbor interaction, we can always split $H(t)=\sum_{i=1}^Nh_{i,i+1}(t)$ into $\Gamma=2$ terms $H(t)=H_1(t)+H_2(t)$ regardless of the system size $N$, where
\begin{align*}
    H_1(t)=\sum_{i;\mathrm{odd}}h_{i,i+1}(t),\quad H_2(t)=\sum_{i;\mathrm{even}}h_{i,i+1}(t).
\end{align*}
We will see later that the multi-linear structure~\eqref{eq:multilinear} is essential for implementing the time-ordered exponential of each $H_\gamma(t)$.

For an arbitrary time-dependent Hamiltonian $H(t)$, the spectrum is allowed to change suddenly and dramatically in such a way that the low-energy subspace at a time $t$ can cover the high-energy spectrum at the following time $t+\delta$. 
Therefore, additional assumptions on derivatives are necessary to acquire improvements over full Hilbert space simulations. We thus assume that all the coefficients $f_X(t)$ and their derivatives up to a certain order are $O(1)$-bounded over the time interval $[0, T]$. 
The primary application scenario is adiabatic state preparation, where $H(t)$ slowly changes from one time-independent Hamiltonian to another. 

The physical intuition behind the low-energy improvement is that the complexity of full Hilbert space Hamiltonian simulation typically depends on a global energy scale such as $\norm{H(t)}$ or the spectral width $E_{\max}(t)-E_{\min}(t)$. When the dynamics stay inside a low-energy subspace, one expects this global dependence to be replaceable by the bandwidth $\Delta_{\mathrm{phys}}$ of the relevant subspace, namely the interval $[E_{\min}(t),E_{\min}(t)+\Delta_{\mathrm{phys}}]$.

For $f_X(t)$ does not change sign over $[0,T]$, we absorb the possible minus sign into $h_X$ such that $f_X(t)\geq0$ holds. 
Let $E_{0,X}$ denote the smallest eigenvalue of $h_X$. 
We replace each local term by $h_X-E_{0,X}I$ when needed, so that all local summands are positive semi-definite. 
This local shift changes $H(t)$ only by a scalar term $-\sum_X f_X(t)E_{0,X}I$ and therefore changes the exact evolution only by a global phase. Physically, the relevant low-energy subspace is measured relative to the instantaneous bottom of the spectrum, say $[E_{\min}(t),E_{\min}(t)+\Delta_{\mathrm{phys}}]$; the global shift $H(t)-E_{\min}(t)I$ would put this band in $[0,\Delta_{\mathrm{phys}}]$ without changing the dynamics. In contrast, shifting each local term separately satisfies the requirements to invoke projection lemmas, but may over-shift the Hamiltonian to global ground energy $>0$. Consequently, the low-energy scale $\Delta$ entering our bounds may be slightly larger than $\Delta_{\text{phys}}$. For frustration-free or naturally positive decompositions this scale is aligned with the physical bandwidth, whereas in non-frustration-free cases it may include a local zero-point correction. 

To formalize our low-energy simulation problem, the most primary challenge is that, the eigenvectors of the time-dependent Hamiltonian $H(t)$ vary with time, making the corresponding projectors $\Pi_{\leq\Delta}(t)=\sum_{n:E_n(t)\leq\Delta} \ket{E_n(t)}\bra{E_n(t)}$ unfixed. 
We remark that in the time-independent case, constant energy threshold $\Delta$ is equivalent to a subspace spanned by fixed low-energy eigenstates, while in the time-dependent case $\Pi_{\leq\Delta}(t)$ may cover nothing if the ground energy $E_0(t)$ increases beyond the constant $\Delta$, making $\Pi_{\leq\Delta}(t)$ possibly not continuous and well-defined.

To address these issues, we alternatively define the spectral projector $\Pi_\sigma(t)$ onto a fixed number of low-energy eigenstates as \eqref{eq:projector}, and further assume that this part of the spectrum has no overlap with the remaining in $I-\Pi_\sigma(t)$. 
As illustrated in \fig{spectral}, $\Pi_{\sigma}(t)$ is continuous even with energy level crossings. 
The derivative of the projector $\dot{\Pi}_\sigma(t)$ and the spectral gap $\gamma(t)$ can also be introduced in a well-defined way as
\begin{align}
\Pi_\sigma(t)=\sum_{n\leq\sigma} \ket{E_n(t)}\bra{E_n(t)}.\label{eq:projector}
\end{align}

\begin{figure}[htbp]
    \centering
    \begin{subfigure}[ht]{.35\linewidth}
        \includegraphics[width=\linewidth]{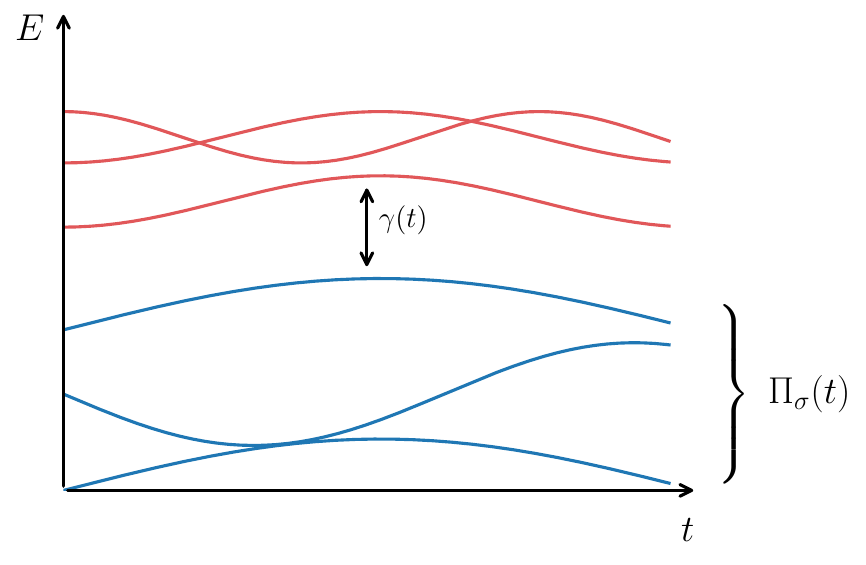}
        \label{fig:left}
    \end{subfigure}\hspace{5em}
    \begin{subfigure}[ht]{.35\linewidth}
        \includegraphics[width=\linewidth]{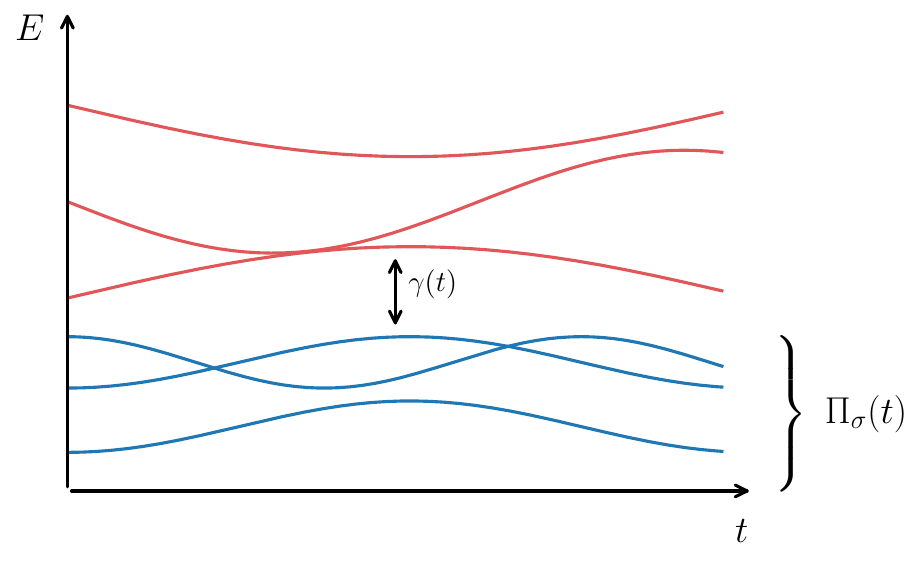}
        \label{fig:right}
    \end{subfigure}
    \caption{\textbf{Spectral Flow of Time-Dependent Hamiltonians.} We define the spectral projector onto the subspace spanned by the three low-energy eigenstates in blue, which has no overlap with the remaining eigenstates in red. In this illustrated case $\sigma=3$, $\Pi_{3}(t)=\ket{\psi_1(t)}\bra{\psi_1(t)}+\ket{\psi_2(t)}\bra{\psi_2(t)}+\ket{\psi_3(t)}\bra{\psi_3(t)}$.}
    \label{fig:spectral}
\end{figure}

In this work, we assume that the initial state is only supported on a low-energy subspace of the initial Hamiltonian $H(t=0)$ with the projector $\Pi_{\sigma}(t=0)$, which naturally fits with actual applications like adiabatic state preparation. 
The formal definition of our low-energy simulation problem is given in the following \defn{1}:

\begin{definition}\label{defn:1}
Given a Hamiltonian $H(t)$ in the form of \eqref{eq:multilinear} and a quantum state only supported on the low-energy subspaces of $H(t=0)$ spanned by fixed number of low-energy eigenstates, our goal is to find a quantum circuit $U$ such that the simulation error is below some threshold $\epsilon$, i.e., $\norm{(U-V)\Pi_{\sigma}(0)}\leq\epsilon$, where $V=\mathcal{T} e^{-i \int_0^T H(\tau) \dd{\tau}}$ is the target simulation over time $[0,T]$.
\end{definition}

Since the exact evolution operator $V$ with time-ordered exponential does not commute with the constructed projector $\Pi_\sigma$, we cannot take it for granted that
\begin{align}\label{longtime}
    \left\|(U_p(T,0)-V(T,0)) \Pi_{\sigma}(0)\right\|\overset{?}{\leq} r\left\|(U_p(\delta,0)-V(\delta,0)) \Pi_{\sigma}(0)\right\|.
\end{align}
To circumvent the issue in \eqref{longtime}, we introduce the following adiabatic perturbation theorem with multiple eigenstates to bound the leakage error when decomposing the simulation into time steps:

\begin{lemma}[Theorem 3 of~\cite{jansen_bounds_2007}] \label{lem:adiabatic} For a time-dependent Hamiltonian $H(t)$ over $t\in[0,T]$, let scaled time $s=\frac{t}{T}\in [0,1]$. Suppose that the spectral projector $\Pi_\sigma(s)$ contains $m(s)\leq\sigma$ eigenvalues (each possibly degenerate, crossings permitted) separated by a gap $\gamma(s)$ from the rest of the spectrum, then
\begin{align*}
\norm{(I-\Pi_\sigma(1))V(1,0)\Pi_\sigma(0)}\leq \frac{1}{T}\left(\frac{\sigma\|{\dot{H}(0)}\|}{\gamma(0)^2}+\frac{\sigma\|{\dot{H}(1)}\|}{\gamma(1)^2}+\int_0^1\dd{s}\left(\frac{\sigma\|\ddot{H}(s)\|}{\gamma(s)^2}+7 \sigma \sqrt{\sigma} \frac{\|\dot{H}(s)\|^2}{\gamma(s)^3}\right)\right).
\end{align*}
\end{lemma}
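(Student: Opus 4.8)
The plan is to regard this as the rigorous gapped adiabatic theorem and to prove it by the reduced-resolvent/integration-by-parts method (in the style of Nenciu and of Jansen--Ruskai--Seiler, whose Theorem~3 this is). First I would pass to the scaled time $s=t/T$ and introduce the adiabatic parameter $\varepsilon=1/T$, so that the propagator $V(s,0)$ obeys $i\varepsilon\,\partial_s V = H(s)V$. The target is the leakage $\norm{(I-\Pi_\sigma(1))V(1,0)\Pi_\sigma(0)}$, and the entire point of the argument is to expose the overall factor $\varepsilon=1/T$: the instantaneous rotation of the subspace $\Pi_\sigma(s)$ drives transitions out of it, but that driving oscillates rapidly on the timescale set by $\varepsilon$ and therefore integrates to something of order $\varepsilon$.

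The central object is an off-diagonal generator $X(s)$ that inverts the commutator action of $H(s)$ on $\dot\Pi_\sigma(s)$. Using the gap hypothesis I would build it from the reduced resolvent by the contour integral $X(s)=\frac{1}{2\pi i}\oint_{\Gamma(s)}R(z,s)\,\dot H(s)\,R(z,s)\,\dd z$, where $R(z,s)=(H(s)-z)^{-1}$ and $\Gamma(s)$ encircles exactly the $m(s)\le\sigma$ eigenvalues selected by $\Pi_\sigma(s)$ while staying at distance $\gtrsim\gamma(s)$ from the rest of the spectrum. Because crossings and degeneracies inside the group are permitted, the construction must treat $\Pi_\sigma(s)$ as a whole group projector rather than a sum of individual eigenprojections; this is precisely why only the gap $\gamma(s)$ to the complementary spectrum enters, and why $\Pi_\sigma(s)$ is differentiable (Kato's construction) wherever the gap stays open. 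This $X(s)$ satisfies, up to a convention-dependent constant, the Sylvester/commutator equation $[H(s),X(s)]=i\,\dot\Pi_\sigma(s)$ together with the resolvent bounds $\norm{\dot\Pi_\sigma(s)}\lesssim\norm{\dot H(s)}/\gamma(s)$ and $\norm{X(s)}\lesssim\norm{\dot H(s)}/\gamma(s)^2$, which already accounts for the powers of $\gamma$ appearing in the statement.

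With $X$ in hand I would replace $\dot\Pi_\sigma$ by $-i[H,X]$ in the leakage and integrate by parts in $s$. The identity $\frac{\dd}{\dd s}\big(V^\dagger X V\big)=V^\dagger\big(\dot X+\tfrac{i}{\varepsilon}[H,X]\big)V$ turns the $\tfrac{1}{\varepsilon}$-weighted commutator into a total derivative plus the lower-order term $\dot X$; solving for the integral of $[H,X]$ then produces boundary terms at $s=0,1$ of size $\varepsilon\,\norm{X}\lesssim\varepsilon\,\norm{\dot H}/\gamma^2$ -- these are the two boundary contributions in the statement -- and a bulk integral of $\dot X$, everything carrying the prefactor $\varepsilon=1/T$. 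Differentiating the contour formula for $X$ via $\dot R=-R\,\dot H\,R$ yields two kinds of terms: one linear in $\ddot H$, contributing $\norm{\ddot H}/\gamma^2$, and one quadratic in $\dot H$ (from $\dot H$ appearing twice), contributing $\norm{\dot H}^2/\gamma^3$; these are exactly the two integrands in the bound.

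The last step, and the part I expect to be most delicate, is tracking the dependence on $\sigma$ -- in particular why the boundary and $\ddot H$ terms carry a single power of $\sigma$ while the quadratic term carries $\sigma\sqrt{\sigma}$. The operator-norm estimates above are rank-free; the sharper $\sigma$-factors come from passing to Schatten norms of the finite-rank projector $\Pi_\sigma$ (rank $m(s)\le\sigma$) and invoking $\norm{A}_1\le\sqrt{\mathrm{rank}}\,\norm{A}_2$-type inequalities on the relevant products, with the additional $\sqrt{\sigma}$ in the quadratic term arising from the double appearance of $\dot\Pi_\sigma$ there. Keeping these rank counts correct uniformly across degeneracies and level crossings, while verifying the differentiability needed for the integration by parts, is the main bookkeeping obstacle; the explicit constant $7$ is simply the value obtained from optimizing these finite-rank estimates, matching Theorem~3 of the cited work.
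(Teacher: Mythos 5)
The paper offers no proof of this lemma at all: it is imported verbatim (with $m(s)\le\sigma$) as Theorem 3 of Jansen--Ruskai--Seiler~\cite{jansen_bounds_2007}, whose proof is precisely the reduced-resolvent/integration-by-parts argument you sketch --- the commutator equation solved by a contour integral, boundary terms of size $\|\dot H\|/\gamma^2$, bulk terms from $\dot X$ splitting into the $\ddot H$ and $\dot H^2$ contributions, and eigenvalue-counting plus Cauchy--Schwarz estimates producing the $\sigma$ and $\sigma\sqrt{\sigma}$ factors. Your proposal is therefore consistent with the paper's treatment, reconstructing the cited source's proof rather than taking a different route.
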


\subsection{Projection lemmas}~\label{sec:lemma}
To analyze the suppression of the operator norm in the subspace of time-dependent Hamiltonians, we first prove a fine-tuned  version of the backbone lemma in~\cite{arad2016connecting}:

\begin{lemma}[Time-dependent version of Theorem 2.1 of~\cite{arad2016connecting}]\label{lem:LeakOp} For a $k$-local time-dependent Hamiltonian $H(t)$ as \eqref{eq:multilinear}, we shift all the local interaction terms to positive semi-definite. 
At an instantaneous time $t$, we can regard $H(t)$ as a time-independent Hamiltonian with a fixed spectrum, the projector defined in \eqref{eq:projector} is equivalent to $\Pi_\sigma(t)=\Pi_{\leq\Delta_t}$ for some energy threshold depending on $t$.

Given an operator $A$, we denote $R$ the sum of strengths of $f_X(t)h_X$ that do not commute with $A$. Given energy threshold $\Lambda_t'$ and $\Lambda_t$, we have
\begin{align*}
\norm{\Pi_{>\Lambda_t'} A\Pi_{\leq\Lambda_t}}\leq\norm{A}\cdot e^{-\lambda(\Lambda_t'-\Lambda_t-2R)},
\end{align*}
where $\lambda=(2gk)^{-1}$, $g$ is an upper bound of the sum of strength acting on a single spin at $t$. \end{lemma}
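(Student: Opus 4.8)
The plan is to reduce the statement to the time-independent matrix-element bound of Arad, Kuwahara, and Landau by freezing the clock. Since the lemma is asserted at a single instantaneous time $t$, the coefficients $f_X(t)$ are fixed numbers, so $H(t)$ is an ordinary $k$-local Hamiltonian with a fixed real spectrum, and $\Pi_{\leq\Lambda_t}$, $\Pi_{>\Lambda_t'}$ are genuine orthogonal projectors onto energy windows that commute with $H(t)$. Because every local term has been shifted to be positive semidefinite (as in \sec{model}), we have $H(t)\succeq 0$, so the exponential weights introduced below are well behaved. The entire content is therefore an exponential-decay-of-matrix-elements estimate between two energy sectors for the local perturbation $A$.

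First I would introduce the imaginary-time weighting $e^{\pm\lambda H}$ with $\lambda=(2gk)^{-1}$ and factor
\[
\Pi_{>\Lambda_t'}\,A\,\Pi_{\leq\Lambda_t}=\bigl(\Pi_{>\Lambda_t'}e^{-\lambda H}\bigr)\bigl(e^{\lambda H}A\,e^{-\lambda H}\bigr)\bigl(e^{\lambda H}\Pi_{\leq\Lambda_t}\bigr).
\]
The two outer factors are controlled purely by the functional calculus: on the range of $\Pi_{\leq\Lambda_t}$ one has $H\leq\Lambda_t$, so $\|e^{\lambda H}\Pi_{\leq\Lambda_t}\|\leq e^{\lambda\Lambda_t}$, and on the range of $\Pi_{>\Lambda_t'}$ one has $H>\Lambda_t'$, so $\|\Pi_{>\Lambda_t'}e^{-\lambda H}\|\leq e^{-\lambda\Lambda_t'}$. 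Submultiplicativity then yields $\|\Pi_{>\Lambda_t'}A\,\Pi_{\leq\Lambda_t}\|\leq e^{-\lambda(\Lambda_t'-\Lambda_t)}\,\|e^{\lambda H}A\,e^{-\lambda H}\|$, so everything reduces to bounding the conjugated operator in the middle.

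The heart of the argument, and the main obstacle, is the bound $\|e^{\lambda H}A\,e^{-\lambda H}\|\leq\|A\|\,e^{2\lambda R}$. The difficulty is that $e^{\lambda H}$ is \emph{not} unitary, so conjugation can inflate the operator norm and one cannot simply pass to the norm of a commutator. The saving structure is locality: split $H(t)=H_A+H_A^{c}$, where $H_A$ collects exactly the terms $f_X(t)h_X$ failing to commute with $A$ (total strength $R$) and $H_A^{c}$ commutes with $A$, so that $[H,A]=[H_A,A]$. I would then expand $e^{\lambda H}A\,e^{-\lambda H}=\sum_{n\geq 0}\tfrac{\lambda^{n}}{n!}\operatorname{ad}_H^{n}(A)$ and estimate the nested commutators $\operatorname{ad}_H^{n}(A)$. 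Here the $k$-locality and the per-site strength bound $g$ enter decisively: the first commutator is constrained to strength $R$, while each subsequent commutator enlarges the support of the operator in a controlled way, and the choice $\lambda=(2gk)^{-1}$ is precisely what keeps this combinatorial expansion inside its radius of convergence and collapses it to the factor $e^{2\lambda R}$. I expect the bookkeeping of how the support and the admissible strength grow under repeated commutators, so as to land exactly on the constant $\lambda=(2gk)^{-1}$, to be the delicate step.

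Finally I would combine the three bounds to obtain $\|\Pi_{>\Lambda_t'}A\,\Pi_{\leq\Lambda_t}\|\leq\|A\|\,e^{-\lambda(\Lambda_t'-\Lambda_t-2R)}$, which is the claim. The only genuinely time-dependent ingredient is that $R$, $g$, and hence $\lambda$ are all evaluated at the frozen time $t$; since the derivation never differentiates in $t$, the time-dependent statement follows verbatim from the frozen-time analysis, which is why this is a direct adaptation of Theorem 2.1 of~\cite{arad2016connecting}.
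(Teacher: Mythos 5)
Your reduction has the same skeleton as the paper's proof: freeze the time $t$, insert $e^{\pm sH(t)}$ around $A$, and bound the two outer factors by functional calculus, so that everything hinges on the conjugation estimate for $e^{sH}Ae^{-sH}$. The paper handles that estimate by quoting Lemma 3.1 of~\cite{arad2016connecting}, which gives $\norm{e^{sH}Ae^{-sH}}\leq \norm{A}(1-sgk)^{-R/gk}$ for all $0\leq s<(gk)^{-1}$, and then optimizes over $s$ (choosing $s=\frac{1}{gk}[1-\frac{R}{\Lambda_t'-\Lambda_t}]$, as in Theorem 2.1 of that reference) to land on the stated exponent. Your variant---fixing $s=\lambda=(2gk)^{-1}$ once and for all---would also work \emph{if} you had the bound $\norm{e^{\lambda H}Ae^{-\lambda H}}\leq\norm{A}e^{2\lambda R}$; that bound is in fact true, since Lemma 3.1 at $s=\lambda$ gives $\norm{A}\,2^{R/(gk)}=\norm{A}\,e^{(2\ln 2)\lambda R}\leq\norm{A}\,e^{2\lambda R}$, and no optimization is then needed.

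The genuine gap is in your proposed self-contained proof of that conjugation bound. Two problems. First, the bookkeeping you defer is not routine: when expanding $\sum_n\frac{\lambda^n}{n!}\operatorname{ad}_H^n(A)$, the admissible sequences $(X_1,\dots,X_n)$ must be counted by the criterion that either $h_{X_j}$ fails to commute with $A$ itself (strength $\leq R$) or $X_j$ intersects $X_1\cup\cdots\cup X_{j-1}$ (strength $\leq (j-1)gk$); this uses the observation that a term commuting with $A$ and supported disjointly from all previously chosen terms commutes with the entire nested commutator. If you instead count terms touching the support of the growing operator, as your sketch suggests, the count involves $|\mathrm{supp}(A)|$, which is not controlled by $R$ and ruins the estimate. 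Second, and more decisively, even with that refinement the resulting majorizing series is $\sum_n\frac{(2\lambda)^n}{n!}\prod_{j=0}^{n-1}(R+jgk)=(1-2\lambda gk)^{-R/gk}$, where the $2^n$ comes from $\norm{[P,Q]}\leq 2\norm{P}\norm{Q}$; its radius of convergence is exactly $s<(2gk)^{-1}$, so it \emph{diverges} at the very value $\lambda=(2gk)^{-1}$ you fix. Your claim that this choice of $\lambda$ keeps the expansion inside its radius of convergence fails precisely at the boundary. Pushing convergence past $(2gk)^{-1}$---which Lemma 3.1 of~\cite{arad2016connecting} does, up to $(gk)^{-1}$---requires a genuinely finer argument that beats the naive $2^n$; this is the step you must either reproduce or, as the paper does, import as a citation.
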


\begin{proof}
For any $0\leq s<{(gk)}^{-1}$, according to Lemma 3.1 of Ref.~\cite{arad2016connecting}, 
    \begin{align*}
        \norm{\Pi_{>\Lambda_t'} A\Pi_{\leq\Lambda_t}}&=\norm{\Pi_{>\Lambda_t'} e^{-sH(t)}e^{sH(t)}A e^{-sH(t)}e^{sH(t)}\Pi_{\leq\Lambda_t}}\\
        &\leq \norm{e^{sH(t)}Ae^{-sH(t)}}\cdot e^{-s(\Lambda_t'-\Lambda_t)}\\
        &\leq  \frac{e^{-s(\Lambda_t'-\Lambda_t)}}{(1-sgk)^{R/gk}}.
    \end{align*}
To minimize the RHS, by simple calculus we pick $s=\frac{1}{gk}[1-\frac{R}{\Lambda_t'-\Lambda_t}]$, then
\begin{align*}
    \norm{\Pi_{>\Lambda_t'} A\Pi_{\leq\Lambda_t}}\leq\norm{A}\cdot e^{-\frac{1}{2gk}(\Lambda_t'-\Lambda_t-2R)}.
\end{align*}
\end{proof}

In addition, we present the behavior of the nested commutator in the low-energy subspace:
\begin{lemma}[Theorem S1 and Corollary S2 of~\cite{Mizuta_2025}] \label{lem:lowcommutator}
Consider $(p+1)$ groups of positive semi-definite local interaction terms $h_X^{p+1},\dots, h_X^{1}\geq0,~ \forall X\subset\Lambda$, the induced $1$-norm of each group is jointly upper bounded by $g$. 
If the projector $\Pi$ satisfies $\norm{\Pi \sum_X h_X^{j}\Pi}\leq \Delta$ for each group index $j=1,\dots,p+1$, the nested commutator with different Hamiltonians in each layer satisfies:
\begin{align*}
    \sum_{X_{p+1},\dots ,X_1}\norm{\Pi[h_{X_{p+1}}^{p+1},\dots,[h_{X_2}^{2},h_{X_1}^{1}]]\Pi}\leq p!(2kg)^p\Delta,
\end{align*}
while in the full Hilbert space the sum of norms is bounded by substituting $\Delta$ with $gN$:  
\begin{align*}
  \sum_{X_{p+1},\dots ,X_1}\norm{[h_{X_{p+1}}^{p+1},\dots,[h_{X_2}^{2},h_{X_1}^{1}]]}\leq p!(2kg)^pgN.  
\end{align*}
\end{lemma}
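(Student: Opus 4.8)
The plan is to prove the two bounds in parallel, noting that they differ only in how the innermost layer is estimated: the combinatorial prefactor $p!(2kg)^p$ is produced identically in both cases by the $k$-locality of the terms, whereas the trailing factor is $gN$ when no projector is present and improves to $\Delta$ once the low-energy projector is retained. So the whole content of the improvement lives in a single place, and I would isolate it from the start.

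First I would establish the full-Hilbert-space bound by induction on the number of commutator layers $p$. Write $C_\ell$ for the inner nested commutator of the first $\ell+1$ groups; it is supported on $\bigcup_{i\le \ell+1} X_i$, a set of at most $(\ell+1)k$ sites, and $[h^{\ell+2}_{X_{\ell+2}},C_\ell]$ vanishes unless $X_{\ell+2}$ meets this support. Using $\norm{[A,B]}\le 2\norm{A}\norm{B}$ together with the single-site strength bound, summing over the new index contributes $\sum_{X_{\ell+2}:\,X_{\ell+2}\cap\mathrm{supp}(C_\ell)\neq\emptyset}\norm{h^{\ell+2}_{X_{\ell+2}}}\le (\ell+1)k\,g$, times the factor $2$. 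Iterating over the $p$ layers gives the product $\prod_{\ell=1}^{p}2\ell kg=2^p p!(kg)^p$, and the base case collapses to $\sum_{X_1}\norm{h^1_{X_1}}\le gN$, since each of the $N$ sites carries total strength at most $g$. This reproduces $p!(2kg)^p gN$, and crucially the system size $N$ enters \emph{only} through this base estimate.

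For the low-energy bound the same peeling produces the same combinatorial factor, so the entire task reduces to replacing the base estimate $\sum_{X_1}\norm{h^1_{X_1}}\le gN$ by $\Delta$ while dragging the projectors $\Pi$ through the nested commutator. Here I would exploit positivity: factor each innermost term as $h^1_{X_1}=(w^1_{X_1})^\dagger w^1_{X_1}$, expand the nested commutator into its $2^p$ signed ordered products, and split each ordered product at the factor $w^1_{X_1}$ by Cauchy--Schwarz, so that summing the index $X_1$ reassembles the whole group $\sum_{X_1}h^1_{X_1}=H^{(1)}$ \emph{collectively} rather than term by term. The gain is exactly that the collective object obeys $\norm{\Pi H^{(1)}\Pi}\le\Delta$, whereas a term-by-term triangle inequality would only recover the much larger $\sum_{X_1}\norm{h^1_{X_1}}\approx gN$; this is the sense in which the low-energy hypothesis does its work.

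I expect the main obstacle to be precisely the non-commutation of $\Pi$ with the intermediate group terms, the same phenomenon flagged in the overview. After the Cauchy--Schwarz split, the reassembled factor $H^{(1)}$ is in general separated from the boundary projector by the remaining local terms, which shift energy and so invalidate the naive inequality $\langle\psi|\Pi H^{(1)}\Pi|\psi\rangle\le\Delta$. The route I would take is to control this leakage with $k$-locality once more: only $O(pkg)$ of interaction strength can sit between $\Pi$ and $H^{(1)}$, so by the energy-shift estimate of \lem{LeakOp} the relevant spectral window is widened only additively, to an effective threshold $\Delta'=\Delta+O(pkg)$ with no dependence on $N$. Proving that this broadening is harmless---that it costs only constants absorbable into the combinatorial prefactor rather than reintroducing the system size---is the technical heart of the argument; once it is in place, multiplying by $p!(2kg)^p$ delivers both displayed bounds.
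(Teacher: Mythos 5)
First, a point of reference: the paper never proves this lemma --- it is imported as a black box (Theorem S1 and Corollary S2 of~\cite{Mizuta_2025}), so your proposal has to be judged on its own merits rather than against an in-paper argument. Your full-Hilbert-space half is fine; it is the standard support-peeling count of~\cite{childs2021theory}, which the paper itself quotes as \eq{99}. The genuine gap is in the low-energy half, at exactly the step you call the ``reassembly.'' After expanding the nested commutator into $2^p$ ordered products (triangle inequality) and applying Cauchy--Schwarz at the split $h^1_{X_1}=(w^1_{X_1})^\dagger w^1_{X_1}$, what you hold for each fixed tuple is a product of \emph{per-term} projected norms, e.g.\ $\norm{\Pi B h^1_{X_1}B^\dagger\Pi}^{1/2}\norm{\Pi C^\dagger h^1_{X_1}C\Pi}^{1/2}$, where $B,C$ collect the other local factors. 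The lemma's sum over $X_1$ sits \emph{outside} these norms, so reassembling $H^{(1)}=\sum_{X_1}h^1_{X_1}$ requires $\sum_{X_1}\norm{\Pi Bh^1_{X_1}B^\dagger\Pi}\lesssim\norm{\Pi BH^{(1)}B^\dagger\Pi}$, i.e.\ sum-of-norms bounded by norm-of-sum. For positive semi-definite summands this goes the wrong way, and it fails by a factor of order $N$: take $k=g=1$, $h^1_i=h^2_i=\ket{1}\bra{1}_i$, and $\Pi$ the projector onto states with at most one excitation, so $\Delta=1$; then $\sum_i\norm{\Pi h^1_i\Pi}=N$ while $2kg\Delta=2$. (The lemma still holds there only because every commutator vanishes --- which shows any valid proof must exploit the commutator structure \emph{before} pulling the $X_1$-sum outside per term, whereas your route discards it at the ordered-product stage.) Under the charitable reading where you fix one pair of states and do scalar Cauchy--Schwarz over $X_1$, you do reassemble $\langle\phi|\Pi BH^{(1)}B^\dagger\Pi|\phi\rangle$, but then you have proven only the weaker statement with the $X_1$-sum \emph{inside} the norm, which is neither the lemma nor what the paper's Section 3.2 consumes.

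The second gap is your proposed cure for non-commutation. \lem{LeakOp} concerns spectral projectors $\Pi_{>\Lambda'},\Pi_{\leq\Lambda}$ of a single Hamiltonian, while the $\Pi$ in this lemma is an \emph{arbitrary} projector satisfying $\norm{\Pi\sum_X h^j_X\Pi}\leq\Delta$ for each of the $p+1$ distinct groups; it need not be a spectral projector of any of them, so the ``energy-shift estimate'' you invoke is not applicable as stated. Even granting spectral structure, \lem{LeakOp} yields exponential suppression with prefactor $\norm{A}$; controlling tails of operators like $(H^{(1)})^{1/2}(I-\Pi)B^\dagger\Pi$, whose norm is $O(\sqrt{gN})$, then forces a widening of order $\lambda^{-1}\log N$, reintroducing the system-size dependence that the clean bound $p!(2kg)^p\Delta$ excludes. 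You flag this broadening as ``the technical heart'' and leave it unproven --- it is precisely the part that does not go through by the mechanism you describe, and it is where the actual proof in~\cite{Mizuta_2025} must (and does) do something more delicate.
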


\subsection{Product formula for time-dependent Hamiltonian simulation}
There are essentially two types of product formulas for time-dependent Hamiltonian simulations, which employ different exponential terms. 
Huyghebaert and De Raedt first showed how to generalize the Trotter formula to ordered exponentials~\cite{JHuyghebaert_1990}, where the time integral over exponent and the time-ordering operator are reserved. We thus call this approach generalized product formulas and denote them by $U_p$ in this paper.
Subsequently, Suzuki developed another decomposition scheme by merely substituting time-independent product formulas with midpoint times~\cite{1993161}, which we call standard product formulas and denote them by $S_p$ in this paper. Besides, there is research~\cite{Ikeda_2023} exploiting time-dependent product formulas with the minimum number of exponential terms. Furthermore, Ref.~\cite{fkh5-b669} provides a unifying framework for all these time-dependent product formulas.

Given $H(t)=\sum_{\gamma=1}^\Gamma H_{\gamma}(t)$, to approximate the exact evolution operator $V(T,0)$, we first decompose $T$ into equidistant steps $\delta=T/r$. The first- and second-order generalized product formulas are defined as
\begin{align*}
& U_1(t+\delta,t)=\prod_\gamma^{\leftarrow} \mathcal{T} \left[e^{-i \int_t^{t+\delta} H_{\gamma}(\tau) \dd{\tau}}\right], \\
& U_2(t+\delta,t)=\prod_\gamma^{\rightarrow} \mathcal{T} \left[e^{-i \int_{t+\delta/2}^t H_{\gamma}(\tau) \dd{\tau}}\right] \cdot \prod_\gamma^{\leftarrow} \mathcal{T} \left[e^{-i \int_t^{t+\delta/2} H_{\gamma}(\tau) \dd{\tau}}\right],
\end{align*}
while the standard product formula is defined by the midpoint rule as
\begin{align*}
    S_2(t+\delta,t) = \prod_\gamma^{\rightarrow} e^{-iH_{\gamma}(t+\delta/2)\delta}\cdot \prod_\gamma^{\leftarrow} e^{-iH_{\gamma}(t+\delta/2)\delta},
\end{align*}
where each (time-ordered) exponential term can be easily implemented by local quantum gates. 

Ref.~\cite{Wiebe_2010} formalized the higher-order standard product formulas and proved their error scaling, given the order of differentiability. Specifically,
\begin{align}\label{eq:recursion}
\begin{split}
    S_{2p}(t+\delta, t) & =  S_{2p-2}\left(t+\delta, t+\left[1-u_p\right] \delta \right) S_{2p-2}\left(t+\left[1-u_p\right] \delta, t+\left[1-2 u_p\right] \delta \right) \\
& \times S_{2p-2}\left(t + \left[1-2 u_p\right] \delta, t +2 u_p \delta \right) S_{2p-2}\left(t +2 u_p \delta, t + u_p \delta \right) S_{2p-2}\left(t+u_p \delta, t\right),
\end{split}
\end{align}
where $u_p = (4-4^{1/(2p-1)})^{-1}$. Although this recursive method in \cite{Wiebe_2010} is developed for standard product formulas, it can be verified that the vanishing of lower-order error terms still holds for generalized product formulas, given appropriate smooth conditions.
Given each $H_\gamma(t)\in C^{p}$, that is, the $0,1,\ldots, p$-th order derivatives exist and are continuous, then
\begin{align*}
U_p(t+\delta,t)= \mathcal{T} \left[e^{-i \int_t^{t+\delta} H(\tau) \dd{\tau}}\right]+O(\delta^{p+1}),\\
S_p(t+\delta,t)= \mathcal{T} \left[e^{-i \int_t^{t+\delta} H(\tau) \dd{\tau}}\right]+O(\delta^{p+1}).
\end{align*}

Recently, Ref.~\cite{mizuta2024expliciterrorboundscommutator} derives explicit error bounds with commutator scaling for time-dependent product formulas. We review their Floquet theory details and list the important results as lemmas in \append{floquet}. Given each $H_{\gamma}(t)\in C^{p+2}$, they first embed each $H_{\gamma}(t)$ into a time-periodic Hamiltonian with $T^{\mathrm{ex}}=2T$ using a bump function. 
Then, by employing Floquet theory, they map the time-periodic Hamiltonian into a time-independent Hamiltonian on infinite-dimensional space, whose Trotter error can be bounded with commutator scaling for time-independent scenarios~\cite{childs2021theory}. 
As a result, they obtain a meaningful convergent bound back to the original Hilbert space as follows:
\begin{equation*}
\|U_p(t+\delta,t)-V(t+\delta,t)\| \in {O}\left(\max _{\tau \in[t, t+\delta]} \alpha_{\mathrm{com}}^{p+1}(\tau) \cdot \delta^{p+1}\right),
\end{equation*}
where the nested commutator factor $\alpha_{\mathrm{com}}^{p+1}(t)$ is given by
\begin{equation}\label{eq:nested1}
\alpha_{\mathrm{com}}^{p+1}(t)=\sum_{\gamma_1, \dots, \gamma_p=1}^{\Gamma+1} \sum_{\gamma_{0}=1}^{\Gamma}\left\|\left\{\prod_{j=1}^p \mathcal{D}_{\gamma_j}(t)\right\} H_{\gamma_{0}}(t)\right\|, \quad \mathcal{D}_\gamma(t)= \begin{cases}\operatorname{ad}_{H_{\gamma}(t)} & (\gamma=1, \dots, \Gamma) \\ 2 \Gamma \frac{\mathrm{~d}}{\mathrm{~d} t} & (\gamma=\Gamma+1)\end{cases}.
\end{equation}

Given each $H_{\gamma}(t)\in C^{p+2}$,
Ref.~\cite{mizuta2024expliciterrorboundscommutator} also proves that for standard product formulas,
\begin{equation*}
\|S_p(t+\delta,t)-V(t+\delta,t)\| \in {O}\left(\max _{\tau \in[t, t+\delta]} \overline{\alpha}_{\mathrm{com}}^{p+1}(\tau) \cdot \delta^{p+1}\right),
\end{equation*}
where the alternative nested commutator factor $\overline{\alpha}_{\mathrm{com}}^{p+1}(t)$ is given by
\begin{equation*}
\overline{\alpha}_{\mathrm{com}}^{p+1}(t)=\sum_{\gamma_1, \dots, \gamma_p=1}^{\Gamma+1} \sum_{\gamma_{0}=1}^{\Gamma}\left\|\left\{\prod_{j=1}^p \mathcal{\overline{D}}_{\gamma_j}(t)\right\} H_{\gamma_{0}}(t)\right\|, \quad \mathcal{\overline{D}}_\gamma(t)= \begin{cases}\operatorname{ad}_{H_{\gamma}(t)} & (\gamma=1, \dots, \Gamma) \\ \Gamma\frac{\mathrm{~d}}{\mathrm{~d} t} & (\gamma=\Gamma+1)\end{cases}.
\end{equation*}

\section{Error Analysis of Time-Dependent Low-Energy Simulation}\label{sec:analysis}

We now prove the claimed Trotter number under the assumptions stated above. The proof separates the genuinely low-energy short-time product-formula error from the leakage accumulated by the exact evolution between different instantaneous low-energy subspaces. In adiabatic state preparation, the Hamiltonian is typically parameterized by a smooth interpolation path of the form
\begin{equation}\label{eq:adiabatic_interpolation}
    H(t)=f_1\left(\frac{t}{T}\right)H^{(1)}+f_2
    \left(\frac{t}{T}\right)H^{(2)},
\end{equation}
where $s=t/T\in[0,1]$ is the scaled time. We take the real-valued schedules $f_1,f_2\in C^{p+2}([0,1],\mathbb{R})$ and impose the endpoint conditions
\begin{equation*}
    f_1(0)=1,\quad f_2(0)=0,\qquad
    f_1(1)=0,\quad f_2(1)=1,
\end{equation*}
\nopagebreak[4]
so that $H(0)=H^{(1)}$ and $H(T)=H^{(2)}$. Adiabaticity additionally requires a spectral gap above the relevant low-energy sector along the path.

Since the proof below repeatedly uses functions of derivatives at different orders of the Hamiltonian, we state the two structural assumptions at the beginning of the section. The first \assum{1} is an ordinary smoothness condition on the coefficient functions. The second, \assum{2}, is the main low-energy projected derivative condition used to apply \lem{lowcommutator} to the nested commutators.

\begin{assumption}\label{assum:1}
    We assume the smoothness of each coefficient function $f_X$ such that 
    \begin{equation}\label{eq:smooth}
    \max _{t \in[0, T]}\abs{\dv[n]{t}f_X(t)} = {O}(1),~\forall n=0,1, \ldots, p+2.
\end{equation}
This indicates that the coefficient functions and their derivatives are uniformly bounded by constants independent of $N$, $\epsilon$, and a linear dependence on $T$. For instance, $f_X(t)=\sin\left(\frac{t}{T}\right)$ or $\frac{1}{|\operatorname{diam(X)}|^2}$.
\end{assumption}

\begin{assumption}\label{assum:2} 
For a projector $\Pi(t)$ onto the low-energy spectrum of $H(t)$ with energy threshold $\Delta_t'$ at any instantaneous time $t\in[0,T]$, by definition $\left\|\Pi(t)\sum\nolimits_X |f_X(t)|h_X \Pi(t)\right\|=\left\|\Pi(t) H(t) \Pi(t)\right\|\leq \Delta_t'$. We further assume that the $n$-th order derivatives of $H(t)$ also satisfy
\begin{align*}
\left\|\Pi(t)\sum\nolimits_X |f_X^{(n)}(t)|h_X \Pi(t)\right\| \leq\Delta_t',\quad n=1,\dots,p,
\end{align*}
where the inequality can be relaxed to $O(\Delta_t')$. 
\end{assumption}

\noindent \assum{2} is an additional projected derivative condition, which is not a consequence of locality alone and may fail for a rapidly varying path. A conservative sufficient condition is obtained by bounding the corresponding derivative in the full Hilbert space. If, for every $t\in[0,T]$ and $n=1,\dots,p$,
\begin{align*}
\left\|\sum\nolimits_X |f_X^{(n)}(t)|h_X\right\|=O(\Delta_t'),
\end{align*}
then \assum{2} follows immediately from $\|\Pi A\Pi\|\leq\|A\|$. For an adiabatic interpolation written in the scaled time, $f_X(t)=a_X(s)$ with $s=t/T$, the chain rule gives $f_X^{(n)}(t)=T^{-n}\partial_s^n a_X(s)$. Hence, if the dimensionless path has full-space derivative of the natural extensive size
\begin{align*}
\left\|\sum\nolimits_X |\partial_s^n a_X(s)|h_X\right\|=O(gN),\qquad n=1,\dots,p,
\end{align*}
then
\begin{align*}
\left\|\Pi(t)\sum\nolimits_X |f_X^{(n)}(t)|h_X\Pi(t)\right\|\leq \left\|\sum\nolimits_X |f_X^{(n)}(t)|h_X\right\|=O\left(\frac{gN}{T^n}\right).
\end{align*}
Therefore \assum{2} holds, for example, whenever $gN/T^n=O(\Delta_t')$ for the derivative orders used in the proof. This full-space check is only a sufficient condition; one may also verify the projected inequality in \assum{2} directly, which can be less restrictive. For example, in certain non-equilibrium quantum many-body dynamics, most of the $f_X$ is static, and only a few of them vary with time; then \assum{2} may extend to general non-adiabatic regimes. The spectral gap used later controls leakage between $\Pi_\sigma(t)$ and its complement, whereas \assum{2} controls the size of derivative layers after projection onto the selected low-energy sector.

Two standard interpolation paths for which this full-space check is explicit are the linear and trigonometric schedules
\begin{align*}
    H(t)&=\left(1-\frac{t}{T}\right)H^{(1)}+\left(\frac{t}{T}\right)H^{(2)},\\
    H(t)&=\cos{\left(\frac{\pi t}{2T}\right)}H^{(1)}+\sin{\left(\frac{\pi t}{2T}\right)}H^{(2)}.
\end{align*}
For the linear schedule, only the first derivative with respect to the scaled time $s=t/T$ is nonzero; for the trigonometric schedule, every fixed-order derivative with respect to $s$ is bounded by a constant. Consequently, the corresponding full-space derivative norm is $O(gN/T^n)$ for both schedules. Thus, whenever $gN/T^n=O(\Delta_t')$ for $n=1,\dots,p$, the full-space sufficient condition is satisfied and \assum{2} follows.

In \sec{scaling}, we derive a fine-grained analysis of the nested commutator of our input model based on \assum{1}. In \sec{shorttime}, we then leverage the variation-of-parameters formula and Floquet theory to expand the short-time simulation error into higher-order Taylor series containing nested commutators. We further apply projection lemmas to derive a low-energy simulation error with commutator scaling based on \assum{2}. Finally, in \sec{longtime} we employ the adiabatic perturbation theory with multiple eigenstates to bound the leakage error when decomposing the whole simulation into time steps. 

\subsection{Nested commutator of $k$-local multi-linear Hamiltonian}\label{sec:scaling}
In the time-independent scenario, operator commutation $[A,B]=0$ implies $e^{A+B}=e^{A}e^{B}$. 
However, regarding the time-ordered exponential $\mathcal{T}e^{A(t)+B(t)}$, in addition to $[A(t),B(t)]=0$, we further require $[A(t),A(t')]=[B(t),B(t')]=0$ to ensure the split, which can be verified by the definition. 
Therefore, with the multi-linear structure~\eqref{eq:multilinear}, the time-ordered exponential satisfies:
\begin{align*}
\mathcal{T}\exp\left(-i\int H_\gamma(\tau)\dd{\tau}\right) = \prod_{h_X\in H_{\gamma}}\mathcal{T}\exp\left(-i\int f_X(\tau)h_X\dd{\tau}\right) = \prod_{h_X\in H_{\gamma}}\exp(-i\int f_X(\tau)\dd{\tau}\cdot h_X),
\end{align*}
where the first equation is due to the fact that $[f_X(t)h_X,f_{X'}(t')h_{X'}]=0$ always holds even if $t'\neq t$ for each individual $H_\gamma$ as we assumed in Section~\sec{model},
and the second equation shows that the time-ordering operator before each local exponential is reduced, explaining why $H_{\gamma}(t)$ can be efficiently exponentiated on a quantum computer.

Appendix E of~\cite{childs2021theory} has shown that the nested commutator of local interaction terms satisfies:
\begin{equation}\label{eq:99}
\sum_{X_1, \dots, X_{p+1}}\norm{\left[h_{X_{p+1}}, \dots,\left[h_{X_2}, h_{X_1}\right]\right]}\leq p!(2k\|\hspace{-.1em}|h|\hspace{-.1em}\|_1)^p\cdot \sum_X\norm{h_X}=O(\|\hspace{-.1em}|h|\hspace{-.1em}\|_1^{p+1}N),
\end{equation}

We then focus on the splitting and scaling of the nested commutator~\eqref{eq:nested1}:
\begin{align*}
\alpha_{\mathrm{com}}^{p+1}(t) &=  \sum_{j=1}^{p+1}\sum_{\gamma_1,\dots,\gamma_j=1}^\Gamma(2\Gamma)^{p+1-j}\sum_{n_1+\cdots +n_j=p+1-j}\norm{\frac{\dd^{n_j}}{\dd{t}^{n_j}}[H_{\gamma_j}(t),\dots,\frac{\dd^{n_2}}{\dd{t}^{n_2}}[H_{\gamma_2}(t),\frac{\dd^{n_1}}{\dd{t}^{n_1}}H_{\gamma_1}(t)]]}\\
& \leq \sum_{j=1}^{p+1}\sum_{X_1,\dots,X_{j}}(2\Gamma)^{p+1-j}\sum_{n_1+\cdots +n_j=p+1-j}\norm{\frac{\dd^{n_j}}{\dd{t}^{n_j}}[f_{X_j}(t)h_{X_j},\dots,\frac{\dd^{n_2}}{\dd{t}^{n_2}}[f_{X_2}(t)h_{X_2},\frac{\dd^{n_1}}{\dd{t}^{n_1}}f_{X_1}(t)h_{X_1}]]}\\
& = \sum_{j=1}^{p+1}(2\Gamma)^{p+1-j}\sum_{X_1,\dots,X_{j}}\underbrace{\sum_{n_1+\cdots +n_j=p+1-j}\left|\frac{\dd^{n_j}}{\dd{t}^{n_j}}(f_{X_j}(t),\dots,\frac{\dd^{n_1}}{\dd{t}^{n_1}}f_{X_1}(t))\right|}_{S(p+1,j) \text{ summands } \leq \frac{j^{p+1}}{j!} F^j}\norm{[h_{X_j},\dots,[h_{X_2},h_{X_1}]]},
\end{align*}
where the first equation follows that each summand contains a $j$-layer nested commutator interleaved with $(p+1-j)$-times derivatives, the second line follows that the summation over each $H_{\gamma}$ iterates over all $h_X$, and the last line extracts all the coefficients in the nested commutator.

Given one of the total $C_{j-1}^p$ partitions $n_1+\cdots +n_j=p+1-j$, coefficient $\frac{\dd^{n_j}}{\dd{t}^{n_j}}(f_{X_j}(t),\dots\frac{\dd^{n_1}}{\dd{t}^{n_1}}f_{X_1}(t))$ contains $1^{n_1}2^{n_2}\cdots j^{n_j}$ summands, each upper bounded by $F^j$ where
\begin{align*}
F = \max_{n=0,1,\dots,p}\max_{X}\max_{t\in[0,T]}\left|\frac{\dd^n}{\dd{t}^n}f_X(t)\right|.
\end{align*}
\assum{1} implies $F=O(1)$, we then set the parameter $g$ as follows:
\begin{align}\label{eq:g}
    g=F\cdot \|\hspace{-.1em}|h|\hspace{-.1em}\|_1,
\end{align}
which is an upper bound of interaction strength on a single spin for both $H(t)$ and its derivatives at any time $t\in[0,T]$, meeting the requirement in \lem{LeakOp}.

There are in total $\sum_{n_1+n_2+\cdots n_j=p+1-j}1^{n_1}2^{n_2}\cdots j^{n_j}=S(p+1,j)\leq \frac{j^{p+1}}{j!}$ summands, where $S(\cdot,\cdot)$ denotes Stirling numbers of the second kind. Actually, $S(p+1,j)$ counts partitions of $p+1$ labeled derivative slots into $j$ nonempty ordered layers up to permutations inside each layer. The elementary bound follows from the standard estimate $S(n,j)\leq j^n/j!$, obtained by counting all functions from an $n$-element set to a $j$-element set and then dividing by $j!$. The nested commutator is thus bounded by
\begin{align*}
\alpha_{\mathrm{com}}^{p+1}(t) 
& \leq \sum_{j=1}^{p+1}(2\Gamma)^{p+1-j}\sum_{X_1,\dots,X_{j}}\frac{j^{p+1}}{j!}F^j\norm{[h_{X_j},\dots,[h_{X_2},h_{X_1}]]}\\
& \leq \sum_{j=1}^{p+1}(2\Gamma)^{p+1-j}j^p \cdot (2kg)^{j-1}gN\\
& \leq (2k(p+1))^{p}\cdot \left(\sum_{j=1}^{p+1}\Gamma^{p+1-j}g^{j}N\right).
\end{align*}

We will frequently refer to this form later. In the same way for the standard product formula,
\begin{align*}
\overline{\alpha}_{\mathrm{com}}^{p+1}(t)\leq (2k(p+1))^{p}\cdot \left(\sum_{j=1}^{p+1}\Gamma^{p+1-j}g^{j}N\right).
\end{align*}

\subsection{Short-time simulation error}\label{sec:shorttime}
We first bound the low-energy simulation error for each time step $\delta$. Concerning time-dependent product formulas, \lem{representation} in \append{floquet} presents the following error representation:
\begin{align}\label{eq:representation}
U_p(\delta, 0)-V(\delta, 0)= i \int_0^\delta \mathrm{~d} \tau V(\delta, \tau) U_p(\tau, 0)\left(\sum_{l \in \mathbb{Z}}\langle l| \Delta^F(\tau)|0\rangle\right),
\end{align}
where $\Delta^F$ is the residual term \eqref{eq:residual} defined for the mapped infinite-dimensional time-independent $H^F$. 
Appending a projector at both sides of the above variation-of-parameters equation~\cite{Mizuta_2025}, we can express the low-energy simulation error as follows:
\begin{gather*}
(U_p(\delta, 0)-V(\delta, 0))\Pi_{\sigma}(0)= i \int_0^\delta \mathrm{~d} \tau V(\delta, \tau) U_p(\tau, 0)\left(\sum_{l \in \mathbb{Z}}\langle l| \Delta^F(\tau)|0\rangle\right)\Pi_{\sigma}(0),\\
\norm{(U_p(\delta, 0)-V(\delta, 0))\Pi_{\sigma}(0)}\leq \int_0^\delta \mathrm{~d} \tau \norm{\sum_{l \in \mathbb{Z}}\langle l| \Delta^F(\tau)|0\rangle\Pi_{\sigma}(0)}.
\end{gather*}

For simplicity, throughout this work, we rewrite the generalized product formula as
\begin{align}
U_p(\delta,0)= \prod_{j=1,2, \dots, q}^{\leftarrow} U_{\gamma_j}\left(\beta_j \delta+\alpha_j \delta, \beta_j \delta\right),\quad\text{where}~U_{\gamma}(t',t)=\mathcal{T}\exp({\int_t^{t'} H_{_{\gamma}}(\tau)\dd{\tau}}).\label{eq:GPF}
\end{align}
The coefficients $\{\alpha_j\}$ follow the same coefficients of Suzuki-Trotter formulas in the time-independent case, and $\{\beta_j\}$ denotes the starting point of the time evolution of each Hamiltonian term in the product formula. For the $p$-th order Trotter-Suzuki formula defined by \eqref{eq:recursion}, there are in total $q=2\cdot5^{\lceil p/2\rceil-1}\cdot\Gamma$ exponential terms, and we have
\begin{align*}
    & \beta_1=0,\quad \beta_q+\alpha_q=1,\\
    & 0\leq\beta_j\leq1,\quad 0\leq\beta_j+\alpha_j\leq1.
\end{align*}

In the derivation of nested commutators from $\Delta(\tau)$~\cite{childs2021theory, mizuta2024expliciterrorboundscommutator}, they actually consider the $(p-1)$-th order Taylor series with integral remainders. 
Then the order condition of Trotter error indicates that only the $O(\tau^p)$ remainder reserves. 
By recursion, they express $\sum_{l \in \mathbb{Z}}\langle l| \Delta^F(\tau)|0\rangle$ as follows:
\begin{multline}\label{eq:remainder}
\hspace*{-2em}\sum_{l \in \mathbb{Z}}\langle l| \Delta^F(\tau)|0\rangle = \sum_{\gamma=1}^\Gamma\sum_{j'=1}^{2q-1}\sum_{\substack{n_1+\cdots+n_{j'}=p \\ n_{j^{\prime}} \neq 0}}U_{< \lceil j'/2 \rceil}(\tau,0)^\dagger\int_0^\tau \dd{\tau}_1\frac{i^{p}\left(\tau-\tau_1\right)^{n_{j^{\prime}}-1} \tau^{p-n_{j^{\prime}}}}{\left(n_{j^{\prime}}-1\right)!n_{j^{\prime}-1}!\cdots n_{1}!} C_{j'\rightarrow 2q-1}(H_{\gamma})\cdot  U_{< \lceil j'/2 \rceil}(\tau,0)\\
- \sum_{j=1}^q(\beta_j+\alpha_j)\sum_{j'=1}^{2j-1}\sum_{\substack{n_1+\cdots+n_{j'}=p \\ n_{j^{\prime}} \neq 0}}U_{< \lceil j'/2 \rceil}(\tau,0)^\dagger\int_0^\tau \dd{\tau}_1\frac{i^{p}\left(\tau-\tau_1\right)^{n_{j^{\prime}}-1} \tau^{p-n_{j^{\prime}}}}{\left(n_{j^{\prime}}-1\right)!n_{j^{\prime}-1}!\cdots n_{1}!} C_{j'\rightarrow 2j-1}(H_{\gamma_j})\cdot  U_{< \lceil j'/2 \rceil}(\tau,0)\\
+ \sum_{j=1}^q\beta_j\sum_{j'=1}^{2j-2}\sum_{\substack{n_1+\cdots+n_{j'}=p \\ n_{j^{\prime}} \neq 0}}U_{< \lceil j'/2 \rceil}(\tau,0)^\dagger\int_0^\tau \dd{\tau}_1\frac{i^{p}\left(\tau-\tau_1\right)^{n_{j^{\prime}}-1} \tau^{p-n_{j^{\prime}}}}{\left(n_{j^{\prime}}-1\right)!n_{j^{\prime}-1}!\cdots n_{1}!} C_{j'\rightarrow 2j-2}(H_{\gamma_j})\cdot  U_{< \lceil j'/2 \rceil}(\tau,0),
\end{multline}
where $U_{< \lceil j'/2 \rceil}$ denotes truncating the product formula~\eqref{eq:GPF} from $j=1,\dots,q$ to $1, \dots,\lceil j'/2 \rceil-1$. The commutators are taken at some time $\tau'\in[0,\tau]$ decided by $\tau$ and $\tau_1$ in the following form:

\begin{gather*}
C_{j'\rightarrow 2q-1}(A) = U_{\gamma_{\lceil j'/2 \rceil}}(\alpha_{\lceil j'/2 \rceil}\tau_1)^\dagger\underbrace{\left[ (\tilde{\alpha}_{2q-j'}D_{2q-j'}(\tau'))^{n_{j'}}\cdots (\tilde{\alpha}_{2q-1}D_{2q-1}(\tau'))^{n_{1}}A(\tau') \right]}_{\text{commutator}} U_{\gamma_{\lceil j'/2 \rceil}}(\alpha_{\lceil j'/2 \rceil}\tau_1)\\
\tilde{\alpha}_{j'}=\begin{cases}\alpha_j & (j^{\prime}=2 j-1) \\ \beta_{j+1}-\beta_j -\alpha_j & (j^{\prime}=2 j)\end{cases},\quad D_{j'}(t)=\begin{cases}\operatorname{ad}_{H_{\gamma_j}(t)}+i\frac{\dd}{\dd{t}} & (j^{\prime}=2 j-1) \\ i\frac{\dd}{\dd{t}} & (j^{\prime}=2 j)\end{cases}.
\end{gather*}

In the full Hilbert space, the norm of each unitary $U_\gamma$ equals $1$. Then Ref.~\cite{mizuta2024expliciterrorboundscommutator} derives
\begin{align*}
\left\|\sum_{l \in \mathbb{Z}}\langle l| \Delta^F(\tau)|0\rangle\right\|& \leq 3p\tau^p\cdot \max_{\tau_1\in[0,\tau]}\sum_{j=1}^q \sum_{j_1',\dots,j_{p}'=1}^{2q-1} \norm{(\tilde{\alpha}_{j_1'}D_{j_1'}(\tau_1))\cdots(\tilde{\alpha}_{j_p'}D_{j_p'}(\tau_1))H_{\gamma_j}(\tau_1)}\\
& \leq 3\left(\frac{q}{\Gamma}\right)^{p+1}  \cdot \max_{\tau_1\in[0,\tau]} \alpha_{\mathrm{com}}^{p+1}(\tau_1)\cdot p\tau^p.
\end{align*}

In the low-energy subspace, issue is that $\norm{\sum_{l \in \mathbb{Z}}\langle l| \Delta^F(\tau)|0\rangle\Pi_\sigma(0)}$ takes the following form:
\begin{align*}
\norm{ U_{\gamma}^\dagger\cdot \text{commutator}\cdot U_{\gamma}\cdot \Pi_{\sigma}(0)}\neq \norm{\text{commutator}\cdot \Pi_{\sigma}(0)},
\end{align*}
which forbids us to apply the projection lemmas to the nested commutator since it is sandwiched by some $U_{\gamma}$. To address this issue, we expand the Taylor series to higher orders to calculate the benefit to the leading-order term from simulating only states in the low-energy subspace. The $(p_0-1)$-th order Taylor series with integral remainders takes the following form:
\begin{align}
\sum_{l \in \mathbb{Z}}\langle l| \Delta^F(\tau)|0\rangle = \sum_{n=p}^{p_0-1}A_n \tau^n + A_{p_0}(\tau),\quad A_{p_0}(\tau)\sim O(\tau^{p_0}),\label{eq:expansion}
\end{align}
where $p_0>p$, $A_n$ only contains commutator. 
The order condition ensures that $\tau^0,\dots,\tau^{p-1}$ terms cancel. 
We then bound the short-time simulation error by
\begin{align}\label{eq:form}
\int_0^\delta \mathrm{~d} \tau \norm{\sum_{l \in \mathbb{Z}}\langle l| \Delta^F(\tau)|0\rangle\Pi_{\sigma}(0)} & \leq \sum_{n=p}^{p_0-1} \frac{\delta^{n+1}}{n+1}\norm{A_n\Pi_{\sigma}(0)}+\underbrace{\int_0^\delta \dd{\tau} \norm{A_{p_0}(\tau)\Pi_{\sigma}(0)}}_{\leq \int_0^\delta \dd{\tau} \norm{A_{p_0}(\tau)}}.
\end{align}

In \append{floquet}, we calculate the Taylor expansion of $\Delta^F(\tau)$ and show that

\begin{lemma}
\label{lem:co}
The coefficients $A_{n}$ in the Taylor expansion of $\sum_{l \in \mathbb{Z}}\langle l| \Delta^F(\tau)|0\rangle$ (Eq.~\eq{expansion}) can be expressed as
\begin{align}\label{eq:A_n}
\begin{split}
A_n & = \frac{i^n}{n!}\sum_{\gamma=1}^\Gamma\sum_{n_1+\cdots +n_{2q-1}=n} \binom{n}{n_1 \cdots n_{2q-1}}\prod_{j'=1,\dots,2q-1}^{\rightarrow}(\tilde{\alpha}_{j'}D_{{j'}}(0))^{n_{j'}} H_\gamma(0)\\
& -\frac{i^n}{n!}\sum_{j=1}^q(\beta_j+\alpha_j)\sum_{n_1+\cdots n_{2j-1}=n} \binom{n}{n_1 \cdots n_{2j-1}}\prod_{j'=1,\dots,2j-1}^{\rightarrow}(\tilde{\alpha}_{j'}D_{{j'}}(0))^{n_{j'}} H_{\gamma_j}(0)\\
& + \frac{i^n}{n!}\sum_{j=1}^q\beta_j\sum_{n_1+\cdots n_{2j-2}=n} \binom{n}{n_1 \cdots n_{2j-2}}\prod_{j'=1,\dots,2j-2}^{\rightarrow}(\tilde{\alpha}_{j'}D_{{j'}}(0))^{n_{j'}} H_{\gamma_j}(0).
\end{split}
\end{align}
\end{lemma}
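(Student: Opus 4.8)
The plan is to obtain $A_n$ as the exact $n$-th Taylor coefficient of $\sum_{l\in\mathbb{Z}}\langle l|\Delta^F(\tau)|0\rangle$ about $\tau=0$, reusing the same structural decomposition that produces the integral-remainder form \eq{remainder}, but retaining every order instead of truncating at the $p$-th remainder. Concretely, I would start from the Floquet representation of the defect established in \append{floquet} (\lem{representation}), in which the product formula \eq{GPF} is lifted to the extended space and the residual is the defect of the lifted $U_p^F$ from solving the lifted Schr\"{o}dinger equation. Applying the product rule to the $q$-fold product splits this defect into a \emph{target} contribution, coming from the $iH^F U_p^F$ term with $H=\sum_\gamma H_\gamma$, and $q$ \emph{per-factor} contributions, coming from $\frac{\dd}{\dd\tau}$ hitting each $U_{\gamma_j}$.

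Next I would make the per-factor contributions explicit. Differentiating the time-ordered exponential $U_{\gamma_j}((\beta_j+\alpha_j)\tau,\beta_j\tau)$ in $\tau$ produces, by the fundamental theorem of calculus, an upper-endpoint term weighted by $\beta_j+\alpha_j$ and a lower-endpoint term weighted by $\beta_j$; after conjugating by $U_p^\dagger$ to move everything into the interaction-like frame, these become conjugations of $H_{\gamma_j}(0)$ by the truncated products of exponentials running over the first $2j-1$ and $2j-2$ Floquet generators, respectively. The target term similarly becomes a conjugation of each $H_\gamma(0)$ by the full product over all $2q-1$ generators. In every case the generators are exactly the $\tilde{\alpha}_{j'}D_{j'}$ of \eq{remainder}, with the odd/even alternation $D_{2j-1}=\operatorname{ad}_{H_{\gamma_j}}+i\frac{\dd}{\dd{t}}$ and $D_{2j}=i\frac{\dd}{\dd{t}}$ arising from the Floquet lift of the time derivative, and with the inner conjugating factors $U_{\gamma_{\lceil j'/2\rceil}}(\alpha_{\lceil j'/2\rceil}\tau_1)$ appearing in $C_{j'\to\cdots}$ collapsing to the identity once we evaluate at the base point $\tau=0$.

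The actual coefficient extraction is then a clean combinatorial step. For any product of operator exponentials $\prod_{j'=1}^{m}e^{\tilde{\alpha}_{j'}\tau D_{j'}}$, expanding each factor as a power series and collecting total degree $n$ gives the $\tau^n$ coefficient
\begin{align*}
\frac{1}{n!}\sum_{n_1+\cdots+n_m=n}\binom{n}{n_1\cdots n_m}\prod_{j'=1,\dots,m}^{\rightarrow}\bigl(\tilde{\alpha}_{j'}D_{j'}(0)\bigr)^{n_{j'}},
\end{align*}
which is precisely the multinomial structure of \eq{A_n}. Applying this with $m=2q-1$ to the target term, with $m=2j-1$ to the upper-endpoint terms, and with $m=2j-2$ to the lower-endpoint terms, attaching the weights $1$, $-(\beta_j+\alpha_j)$ and $+\beta_j$, and acting on $H_\gamma(0)$ or $H_{\gamma_j}(0)$, reproduces the three sums of \eq{A_n}. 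The overall $i^n$ is collected by factoring one $i$ out of each of the $n$ anti-Hermitian evolution generators (so that the remaining $D_{j'}$ are the Hermitian-level operators), consistent with the $i^p$ bookkeeping already visible in the remainder \eq{remainder}.

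The main obstacle I anticipate is the Floquet bookkeeping rather than the combinatorics: verifying that $\sum_{l\in\mathbb{Z}}\langle l|\cdot|0\rangle$ projects the extended-space conjugation back to genuine time derivatives of $H_\gamma$ evaluated at $t=0$, that the lifted $i\frac{\dd}{\dd{t}}$ threads through the construction so that the $D_{j'}$ act exactly as stated, and that evaluating at $\tau=0$ legitimately sets every inner conjugating unitary to the identity so that $C_{j'\to\cdots}$ reduces to the bare nested product $\prod^{\rightarrow}(\tilde{\alpha}_{j'}D_{j'}(0))^{n_{j'}}H_\gamma(0)$. Once these identifications are in place, matching the three families term-by-term against \eq{A_n} is routine, and consistency with \eq{remainder} (whose $p$-th order integral remainder must agree with the tail $A_{p_0}(\tau)$ of \eq{expansion}) provides a useful correctness check.
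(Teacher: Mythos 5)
Your proposal is correct and follows essentially the same route as the paper: starting from the three-term Floquet expression for $\Delta^F(\tau)$, extracting the $\tau^n$ coefficient via the multinomial expansion of the product of exponentials (which collapse to the identity at $\tau=0$, leaving pure nested $\operatorname{ad}$ products), and then mapping back to the physical space. The ``Floquet bookkeeping'' you flag as the remaining obstacle is exactly what the paper's proof carries out explicitly via the identities $[H_{\mathrm{LP}},H_\gamma^{\mathrm{Add}}]=\sum_m \operatorname{Add}_m\otimes(-i\tfrac{\dd}{\dd t}H_\gamma(t))_m$, $[H_{\gamma'}^{\mathrm{Add}},H_\gamma^{\mathrm{Add}}]=\sum_m \operatorname{Add}_m\otimes([H_{\gamma'}(t),H_\gamma(t)])_m$, and $\sum_{l,m}\langle l|\operatorname{Add}_m|0\rangle\otimes H_m=H(0)$, which are one-line consequences of the definitions already set up in the appendix.
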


\begin{lemma}\label{lem:remainder}
The remainder $A_{p_0}(\tau)$ in the Taylor expansion of $\sum_{l \in \mathbb{Z}}\langle l| \Delta^F(\tau)|0\rangle$ (Eq.~\eq{expansion}) satisfies
\begin{align}\label{eq:remainder2}
\int_0^\delta\dd{\tau}\norm{A_{p_0}(\tau)}\leq  3 (2q(p_0+1)kg\delta)^{p_0+1}(2k)^{-1}N.
\end{align}
\end{lemma}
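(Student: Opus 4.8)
The plan is to start from the explicit integral-remainder representation of $\sum_{l\in\mathbb{Z}}\langle l|\Delta^F(\tau)|0\rangle$, but to run the Taylor-with-remainder recursion to order $p_0-1$ rather than to order $p-1$. With this choice, $A_{p_0}(\tau)$ is exactly the expression displayed in \eq{remainder} with every occurrence of $p$ replaced by $p_0$: a sum of the three groups of terms (the one indexed by $\gamma$, and the two indexed by $j$ with weights $\beta_j+\alpha_j$ and $\beta_j$), over $j'$, and over partitions $n_1+\cdots+n_{j'}=p_0$ with $n_{j'}\neq 0$, of an inner integral $\int_0^\tau\dd{\tau_1}$ of $\frac{(\tau-\tau_1)^{n_{j'}-1}\tau^{p_0-n_{j'}}}{(n_{j'}-1)!\,n_{j'-1}!\cdots n_1!}$ times the nested-commutator factor $C_{j'\to\cdots}$, which is conjugated by the truncated product formulas $U_{<\lceil j'/2\rceil}$ and by $U_{\gamma_{\lceil j'/2\rceil}}$. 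The order condition guarantees that this is indeed the $O(\tau^{p_0})$ tail, with the coefficients $A_p,\dots,A_{p_0-1}$ of \eq{expansion} absorbing the intermediate orders.

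First I would take operator norms. Every $U_{<\lceil j'/2\rceil}$ and the inner $U_{\gamma_{\lceil j'/2\rceil}}$ are unitary, so submultiplicativity discards them and leaves only the norm of the bare $(p_0+1)$-layer nested object obtained by applying the $p_0$ operators $D_{j'}=\operatorname{ad}_{H_{\gamma_j}}+i\frac{\dd}{\dd{t}}$ (or $i\frac{\dd}{\dd{t}}$), with $|\tilde\alpha_{j'}|\le 1$, to a single $H_\gamma$, evaluated at an intermediate time. This is precisely the step that forces a full-Hilbert-space estimate: because the commutator is sandwiched by the $U$'s we cannot insert $\Pi_\sigma(0)$, so we must use $\norm{A_{p_0}(\tau)\Pi_\sigma(0)}\le\norm{A_{p_0}(\tau)}$, which is why an $N$-factor (rather than $\Delta$) appears. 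The whole point is that this is acceptable, since the remainder is suppressed by the high power $\delta^{p_0+1}$.

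Next I would perform the integrals and the counting. The inner integral gives $\int_0^\tau(\tau-\tau_1)^{n_{j'}-1}\dd{\tau_1}=\tau^{n_{j'}}/n_{j'}$, so each summand collapses to $\frac{\tau^{p_0}}{n_{j'}!\,n_{j'-1}!\cdots n_1!}$ times the commutator norm, and the outer integral contributes $\int_0^\delta\tau^{p_0}\dd{\tau}=\delta^{p_0+1}/(p_0+1)$. I would then bound the combinatorial sums: the partition sum obeys the multinomial estimate $\sum_{n_1+\cdots+n_{j'}=p_0}\frac{1}{n_1!\cdots n_{j'}!}=(j')^{p_0}/p_0!\le(2q)^{p_0}/p_0!$, the index $j'$ ranges over at most $2q-1$ values, and the weights $\beta_j+\alpha_j,\beta_j\in[0,1]$ are harmless. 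Finally I would insert the full-Hilbert-space commutator scaling of \sec{scaling} (the same estimate that produced $\alpha_{\mathrm{com}}^{p+1}(t)\le(2k(p+1))^p\sum_{j}\Gamma^{p+1-j}g^jN$, now run at the layer count $p_0+1$), which bounds the nested-commutator factor through $(2kg)^{p_0}\,gN$ with $g$ as in \eq{g}. Collecting the factors $2q$, $(p_0+1)$, $k$, $g$, $\delta$, $N$ and absorbing numerical constants into the leading $3$ yields the claimed $3\,(2q(p_0+1)kg\delta)^{p_0+1}(2k)^{-1}N$.

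The hard part will be the bookkeeping in the last step: consolidating the multinomial count, the $\le 2q-1$ choices of $j'$, and the layer-dependent commutator prefactor so that they collapse cleanly into the single power $(2q(p_0+1)kg\delta)^{p_0+1}$ without leaving residual $\Gamma$- or $p_0$-dependent debris. I expect this to amount to reusing, essentially verbatim, the counting that already converts the residual into $3(q/\Gamma)^{p+1}\max_{\tau_1}\alpha_{\mathrm{com}}^{p+1}(\tau_1)\,p\tau^p$ in the full-space analysis, merely re-run at order $p_0$; the only genuinely new content is the observation in the second paragraph that the sandwiching unitaries block projection, so the remainder retains its $N$-scaling and is controlled purely by the order $p_0+1$ in $\delta$.
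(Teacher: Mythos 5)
Your proposal follows the paper's proof essentially step for step: the same order-$p_0$ remainder representation obtained by substituting $p_0$ for $p$ in \eq{remainder}, discarding the conjugating unitaries $U_{<\lceil j'/2\rceil}$ by unitary invariance, the same observation that the sandwiching unitaries block insertion of $\Pi_\sigma(0)$ (so the remainder keeps its $N$-scaling and is controlled purely by the power $\delta^{p_0+1}$), and the same final collapse into $3(q/\Gamma)^{p_0+1}\max_{\tau\in[0,\delta]}\alpha_{\mathrm{com}}^{p_0+1}(\tau)\,\delta^{p_0+1}$ followed by the layer-$(p_0+1)$ commutator scaling $(2k(p_0+1))^{p_0}\sum_{j}\Gamma^{p_0+1-j}g^jN$ from the scaling section.

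One step needs repair, and it is precisely the step the paper goes out of its way to avoid. You write that the inner integral ``gives $\int_0^\tau(\tau-\tau_1)^{n_{j'}-1}\dd{\tau_1}=\tau^{n_{j'}}/n_{j'}$, so each summand collapses to $\frac{\tau^{p_0}}{n_{j'}!\cdots n_1!}$ times the commutator norm.'' As written this is invalid: $C_{j'\to 2q-1}(H_\gamma)$ depends on $\tau_1$, both through the conjugating unitaries $U_{\gamma_{\lceil j'/2\rceil}}(\alpha_{\lceil j'/2\rceil}\tau_1)$ and through the intermediate time $\tau'$ at which the nested commutator is evaluated, so its norm is not a constant that can be pulled out of the $\tau_1$-integral. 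The paper states this explicitly (``since $C_{j'\rightarrow 2q-1}(A)$ depends on $\tau_1$, we cannot separate the operators and directly calculate $\int_0^\tau\dd{\tau_1}(\tau-\tau_1)^{n_{j'}-1}$''). The fix is immediate and only costs constants: first bound $\norm{C_{j'\to 2q-1}(\cdot)}$ by its maximum over $\tau_1\in[0,\tau]$ and then evaluate the integral exactly (your route, which is in fact slightly tighter than the paper's), or do as the paper does and use the pointwise inequalities $(\tau-\tau_1)^{n_{j'}-1}\tau^{p_0-n_{j'}}\le\tau^{p_0-1}$ and $1/(n_{j'}-1)!\le p_0/n_{j'}!$ before maximizing. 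A similar caution applies to your multinomial estimate: the commutator depends on the partition $(n_1,\dots,n_{j'})$, so you must replace each summand by a uniform worst-case bound (equivalently, over-count partitions by sequences $j_1',\dots,j_{p_0}'$ as the paper does, absorbing everything into $\alpha_{\mathrm{com}}^{p_0+1}$) before the combinatorial sum factors out. With these two repairs your argument coincides with the paper's and yields the claimed bound.
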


The equality sign rather than inequality in \eq{A_n} is crucial since we need to append a projector to both sides. 
Note that in the first term of $A_n$ the summation over $\sum_{\gamma=1}^\Gamma H_{\gamma}(0)$ is one of the $\frac{q}{\Gamma}$ stages in $\sum_{j=1}^qH_{\gamma_j}(0)$, in the second and last terms the summation over $n_{j'\leq2j-1}$ is also contained in $n_{j'\leq2q-1}$, we have 
\begin{align*}
\norm{A_n\Pi_{\sigma}(0)} & \leq \frac{3}{n!}\sum_{j=1}^q\sum_{n_1+\cdots +n_{2q-1}=n} \binom{n}{n_1 \cdots n_{2q-1}}\norm{\prod_{j'=1,\dots,2q-1}^{\rightarrow}(\tilde{\alpha}_{j'}D_{{j'}}(0))^{n_{j'}} H_{\gamma_j}(0)\Pi_\sigma(0)}\\
& \leq 3\sum_{j=1}^q\sum_{n_1+\cdots +n_{2q-1}=n}\norm{\prod_{j'=1,\dots,2q-1}^{\rightarrow}(\tilde{\alpha}_{j'}D_{{j'}}(0))^{n_{j'}} H_{\gamma_j}(0)\Pi_\sigma(0)}\\
& \leq 3\sum_{j=1}^q\sum_{j_1',\dots,j_n'=1}^{2q-1}\norm{D_{j_1'}(0)\cdots D_{j_n'}(0)H_{\gamma_j}(0)\Pi_\sigma(0)}\\
&\leq 3\left(\frac{q} {\Gamma}\right)^{n+1}\sum_{\gamma_1,\dots,\gamma_n=1}^{\Gamma+1}\sum_{\gamma_{0}=1}^\Gamma\norm{\prod_{j=1}^n\mathcal{D}_{\gamma_j}(0)H_{\gamma_{0}}(0)\Pi_{\sigma}(0)},
\end{align*}
where the second inequality follows that the multinomial coefficient is no larger than $n!$, the third and last inequality follows the repetition relationship and $|\tilde{\alpha}_{j'}|\leq1$. 
We can further split the norm into $(n+1)$-layers with local interaction terms as in \sec{scaling}:
\begin{align*}
\norm{A_n\Pi_{\sigma}(0)} & \leq 3\left(\frac{q} {\Gamma}\right)^{n+1}\sum_{j=1}^{n+1}(2\Gamma)^{n+1-j}\sum_{X_1,\dots,X_j}\frac{j^{n+1}}{j!}F^j\norm{[h_{X_j},\dots,[h_{X_2},h_{X_1}]]\Pi_\sigma(0)}\\
& \leq  3 q^{n+1} \sum_{j=1}^{n+1}2^{n+1-j}\frac{j^{n+1}}{j!}F^j\sum_{X_1,\dots,X_j}\big\|{\underbrace{[h_{X_j},\dots,[h_{X_2},h_{X_1}]]}_{\text{commutator}}\Pi_\sigma(0)}\big\|,
\end{align*}

Note that in the above analysis, all the operators are considered to be at the same time $t=0$.
As we mentioned in \lem{LeakOp}, at an instantaneous time, the spectral projector is equivalent to $\Pi_{\sigma}(0)=\Pi_{\leq\Delta_{t=0}}$ for some energy threshold $\Delta_{t=0}$. 
Similarly we define $\Pi_{>\Delta_{t=0}'}$ ($\Delta_{t=0}'\geq\Delta_{t=0}$) for the fixed spectrum of $H(0)$. We then use the techniques for the time-independent case~\cite{Mizuta_2025}:
\begin{align*}
\norm{\text{commutator}\cdot \Pi_{\leq\Delta_{t=0}}}\leq \norm{\Pi_{>\Delta_{t=0}'}\cdot\text{commutator}\cdot \Pi_{\leq\Delta_{t=0}}}+\underbrace{\norm{\Pi_{\leq\Delta_{t=0}'}\cdot\text{commutator}\cdot\Pi_{\leq\Delta_{t=0}}}}_{\leq\norm{\Pi_{\leq\Delta_{t=0}'}\cdot\text{commutator}\cdot\Pi_{\leq\Delta_{t=0}'}}}.
\end{align*}

The sum of terms of the first kind is directly bounded by~\lem{LeakOp}:

\begin{align*}
\norm{\Pi_{>\Delta_{t=0}'}A_n\Pi_{\leq\Delta_{t=0}}}\leq & 3q^{n+1}\sum_{j=1}^{n+1}2^{n+1-j}\frac{j^{n+1}}{j!}F^j\sum_{X_1,\dots,X_j}\norm{\Pi_{>\Delta_{t=0}'}[h_{X_j},\dots,[h_{X_2},h_{X_1}]]\Pi_{\leq\Delta_{t=0}}}\\
\leq & 3q^{n+1}\sum_{j=1}^{n+1}(2j)^n(kg)^{j-1}\cdot gN \cdot e^{-\lambda(\Delta_{t=0}'-\Delta_{t=0}-2g(k+(j-1)(k-1))}\\
\leq & 3(2eq(n+1)kg)^{n+1}(2k)^{-1}N e^{-\lambda(\Delta_{t=0}'-\Delta_{t=0})},
\end{align*}
where the second inequality follows that the commutator $[h_{X_j},\dots,[h_{X_2},h_{X_1}]]$ is supported on at most $k+(j-1)(k-1)$ spins. 

For the sum of terms of the second kind, we alternatively assign the function coefficients into the nested commutator, there are in total $S(n+1,j)$ tuples of $(m_j,\dots,m_1)$ after split:
\begin{align*}
& \sum_{X_1,\dots,X_j}\sum_{n_1+\cdots +n_j=n+1-j}\Big|\frac{\dd^{n_j}}{\dd{t}^{n_j}}(f_{X_j}(0),\dots,\frac{\dd^{n_1}}{\dd{t}^{n_1}}f_{X_1}(0))\Big|\norm{\Pi_{\leq\Delta_{t=0}'}[h_{X_j},\dots,[h_{X_2},h_{X_1}]]\Pi_{\leq\Delta_{t=0}'}} \\ 
= & \sum_{S(n+1,j)}\sum_{X_1,\dots,X_j} \left\|\Pi_{\leq\Delta_{t=0}'}[f_{X_j}^{(m_j)}(0)h_{X_j},\dots,[f_{X_2}^{(m_2)}(0)h_{X_2},f_{X_1}^{(m_1)}(0)h_{X_1}]]\Pi_{\leq\Delta_{t=0}'}\right\|
\end{align*}

To use \lem{lowcommutator}, we first note that by definition the layer of the nested commutator consisting of no derivatives already satisfies the requirement 
\begin{align*}
\left\|\Pi_{\leq\Delta_{t=0}'}\sum\nolimits_X f_X(0)h_X \Pi_{\leq\Delta_{t=0}'}\right\| = \left\|\Pi_{\leq\Delta_{t=0}'}H(0) \Pi_{\leq\Delta_{t=0}'}\right\|\leq\Delta_{t=0}'.
\end{align*}
We now invoke \assum{2}, stated at the beginning of this section and motivated there by the adiabatic scaling $f_X^{(n)}(t)=T^{-n}\partial_s^n a_X(s)$. Then by \lem{lowcommutator} we have
\begin{align*}
\norm{\Pi_{\leq\Delta_{t=0}'}A_n\Pi_{\leq\Delta_{t=0}'}}\leq & 3q^{n+1}\sum_{j=1}^{n+1}2^{n+1-j}S(n+1,j)(j-1)!(2kg)^{j-1}\Delta_{t=0}'\\
\leq &3q^{n+1} \sum_{j=1}^{n+1} (2j)^n (kg)^{j-1}\Delta_{t=0}'\\
\leq & 3(n+1)^{n+1}(2qkg)^n q\Delta_{t=0}'.
\end{align*}
Combining the two parts together, note that for $n$ from $p$ to $p_0-1$, $n+1\leq p_0$,
\begin{align}
\begin{split}\label{eq:sum}
\sum_{n=p}^{p_0-1}\frac{\delta^{n+1}}{n+1}\norm{A_n\Pi_\sigma(0)} & \leq 3\sum_{n=p}^{p_0-1}(2eqp_0kg\delta)^{n+1}(2k)^{-1}Ne^{-\lambda(\Delta_{t=0}'-\Delta_{t=0})} + 3\sum_{n=p}^{p_0-1}(2qp_0kg\delta)^n q\Delta_{t=0}'\delta\\
& \leq 3\sum_{n=p}^{\infty}(2eqp_0kg\delta)^{n+1}(2k)^{-1}Ne^{-\lambda(\Delta_{t=0}'-\Delta_{t=0})} + 3\sum_{n=p}^{\infty}(2qp_0kg\delta)^n q\Delta_{t=0}'\delta.
\end{split}
\end{align}
For a sufficiently small time step $\delta$ such that $2q(p_0+1)kg\delta\leq \frac{1}{2e}$, which can be achieved by multiplying $r$ with a constant, the formula for the sum of geometric sequences gives:
\begin{align*}
\sum_{n=p}^\infty(2eqp_0kg\delta)^n &\leq \sum_{n=p}^{\infty}\left(\frac{1}{2}\right)^n < 1;\\
\sum_{n=p}^\infty(2qp_0kg\delta)^n &= \frac{(2qp_0kg\delta)^p}{1-2qp_0kg\delta}< 2 (2qp_0kg\delta)^p.
\end{align*}
Respectively applying these to the two terms in~\eq{sum} gives a simplified upper bound:
\begin{align*}
\sum_{n=p}^{p_0-1}\frac{\delta^{n+1}}{n+1}\norm{A_n\Pi_\sigma(0)}\leq \frac{3}{2}k^{-1}Ne^{-\lambda(\Delta_{t=0}'-\Delta_{t=0})} + 6(2qp_0kg\delta)^pq\Delta_{t=0}'\delta.
\end{align*}

Finally, the remainder $A_{p_0}(\tau)$ in~\eq{remainder2} satisfies
\begin{align*}
\int_0^\delta\dd{\tau}\norm{A_{p_0}(\tau)}\leq  \frac{3}{2} (2q(p_0+1)kg\delta)^{p_0+1}k^{-1}N\leq \frac{3}{2}\left(\frac{1}{e}\right)^{p_0+1}k^{-1}N.
\end{align*}
Now we set the parameters as follows, such that the first summand and the remainder are both less than $\frac{\epsilon}{4r}$:
\begin{align*}
\Delta_{t=0}'&=\Delta_{t=0}+2gk\cdot \log (6k^{-1}Nr/\epsilon), \\
p_0 &= \max\{\lceil \log(6k^{-1}Nr/\epsilon)\rceil-1,~p\}.
\end{align*}

To sum up, the short-time simulation error \eqref{eq:form} of the generalized product formula is bounded by
\begin{align}
\norm{(U_p(\delta, 0)-V(\delta, 0))\Pi_{\sigma}(0)} & \leq \frac{3}{2}k^{-1}Ne^{-\lambda(\Delta_{t=0}'-\Delta_{t=0})} + 6(2qp_0kg\delta)^pq\Delta_{t=0}'\delta + \frac{3}{2}\left(\frac{1}{e}\right)^{p_0+1}k^{-1}N\nonumber \\
& \leq \frac{\epsilon}{2r} + 3(kp_0g)^p\Delta_{t=0}'(2q\delta)^{p+1}. \label{eq:shorterror}
\end{align}
Since $\Delta_{t=0}'$ completely gets rid of $\operatorname{poly}(N)$ scaling, the only remaining issue is non-commuting in step accumulation. For each short time segment, we can always substitute the initial threshold $\Delta_{t=0}$ with the energy supremum $\Delta$ over $[0,T]$.

We remark that the error of the standard product formula only differs from above by the absence of a constant $2$ before $\Gamma$. 
Therefore, the error bound~\eq{shorterror} also applies to standard product formulas.

\subsection{Long-time simulation error}\label{sec:longtime}
Next, we address the long-time simulation problem \eqref{longtime}. 
We cannot take for granted that
\begin{align*}
\left\|(U_p(T,0)-V(T,0)) \Pi_{\sigma}(0)\right\|\overset{?}{\leq} r\left\|(U_p(\delta,0)-V(\delta,0)) \Pi_{\sigma}(0)\right\|.
\end{align*}
We follow the decomposition in proving the triangle inequality for telescoping operator products:
\begin{equation*}
\begin{aligned}
& \norm{(U_p(T,0)-V(T,0)) \Pi_{\sigma}(0)}=\norm{\sum_{j=0}^{r-1}U_p(T,(j+1)\delta)\left(U_p((j+1)\delta,j\delta)-V((j+1)\delta,j\delta)\right)V(j\delta,0) \Pi_{\sigma}(0)} \\
& \leq \sum_{j=0}^{r-1}\norm{U_p(T,(j+1)\delta)\left(U_p((j+1)\delta,j\delta)-V((j+1)\delta,j\delta)\right)\cdot \underbrace{I}_{\text{split}}\cdot V(j\delta,0) \Pi_{\sigma}(0)}\\
& \leq \sum_{j=0}^{r-1}\norm{U_p(T,(j+1)\delta)\underbrace{\left(U_p((j+1)\delta,j\delta)-V((j+1)\delta,j\delta)\right)\Pi_{\sigma}(j\delta)}_{\text{short-time error}}V(j\delta,0) \Pi_{\sigma}(0)}\\
& + \sum_{j=0}^{r-1}\norm{U_p(T,(j+1)\delta)\left(U_p((j+1)\delta,j\delta)-V((j+1)\delta,j\delta)\right)\underbrace{(I-\Pi_\sigma(j\delta))V(j\delta,0) \Pi_{\sigma}(0)}_{\text{leakage error}}}.
\end{aligned}
\end{equation*}
The above analysis shows that the long-time simulation error is the sum of short-time simulation errors plus additional leakage error terms, that is,
\begin{align*}
\norm{(U_p(T,0)-V(T,0)) \Pi_{\sigma}(0)} \leq & \sum_{j=0}^{r-1}\norm{U_p((j+1)\delta,j\delta)-V((j+1)\delta,j\delta)\Pi_{\sigma}(j\delta)}\\
+ & \sum_{j=0}^{r-1}\norm{U_p((j+1)\delta,j\delta)-V((j+1)\delta,j\delta)}\cdot \norm{(I-\Pi_\sigma(j\delta))V(j\delta,0) \Pi_{\sigma}(0)}.
\end{align*}
Each short-time simulation error can be bounded in a similar way to~\eqref{eq:shorterror}. 
We only need to substitute the initial time $0$ with $j\delta$. 
Correspondingly, the energy threshold $\Delta_{t=0}$ should be modified for the fixed spectrum of $H(j\delta)$ with $\Pi_\sigma(j\delta)$, which is then bounded by the supremum $\Delta$.

Next, we use adiabatic perturbation theory with multiple eigenstates to bound the leakage error. 
For each leakage error term, we define the scaled time $s_j=\frac{t}{j\delta}\in [0,1]$. 
Note that the norm of derivative with respect to $s_j$ is smaller than that with respect to $s=\tfrac{t}{T}$ since $\dv{}{s_j}= j\delta \cdot \dv{}{t}$, $j\delta\leq T$. 
Therefore, every $\dot{H}(s_j)$ and $\ddot{H}(s_j)$ can be respectively upper bounded by
\begin{align*}
\|{\dot{H}}\|\equiv \max_{s\in[0,1]} \norm{\dv{H}{s}},\quad \|{\ddot{H}}\| \equiv \max_{s\in[0,1]}  \norm{\dv[2]{H}{s}}.
\end{align*}
Additionally, with a uniform lower bound $\gamma$ of spectral gap $\gamma(s)$, by \lem{adiabatic} we have
\begin{align}\label{eq:gapbound}
\norm{(I-\Pi_{\sigma}(j\delta))V(j\delta,0)\Pi_{\sigma}(0)} \leq \frac{1}{j\delta}\left(2\sigma\frac{\|{\dot{H}}\|}{\gamma^2}+7\sigma\sqrt{\sigma} \frac{\|{\dot{H}}\|^2}{\gamma^3}+\sigma\frac{\|{\ddot{H}}\|}{\gamma^2}\right).
\end{align}

Finally, we conclude that the whole long-time simulation error is bounded by 
\begin{align}
& \norm{(U_p(T,0)-V(T,0)) \Pi_{\sigma}(0)}\nonumber\\
\leq & \tilde{O}\left(g^{p+1}N\delta^{p+1}\right)\cdot \frac{1}{\delta}\left(1+\frac{1}{2}+\cdots \frac{1}{r-1}\right)\left(2\sigma\frac{\|{\dot{H}}\|}{\gamma^2}+7\sigma\sqrt{\sigma} \frac{\|{\dot{H}}\|^2}{\gamma^3}+\sigma\frac{\|{\ddot{H}}\|}{\gamma^2}\right)+ \tilde{O}\left(g^{p}\Delta'\frac{T^{p+1}}{r^p}\right)+\frac{\epsilon}{2}\nonumber\\
\leq & \tilde{O}\left(g^{p+1}N\frac{T^{p}}{r^{p}}\right)\cdot(1+\ln r)\left(2\sigma\frac{\|{\dot{H}}\|}{\gamma^2}+7\sigma\sqrt{\sigma} \frac{\|{\dot{H}}\|^2}{\gamma^3}+\sigma\frac{\|{\ddot{H}}\|}{\gamma^2}\right)+ \tilde{O}\left(g^{p}\Delta'\frac{T^{p+1}}{r^p}\right)+\frac{\epsilon}{2}, \label{eq:longtime_err_general}
\end{align}
where the last inequality follows that $\sum_{j=1}^{r-1} \frac{1}{j}\leq \ln(r)+1.$

\section{Applications} \label{sec:application}
We now apply the general error expression~\eqref{eq:longtime_err_general} to spin models in the adiabatic regime. This is the regime in which the gap-dependent leakage term has a direct interpretation and in which \assum{2} can be verified from the slow variation of the interpolation path.
In our analysis we take $k=O(1)$ and $p=O(1)$. The upper bound of interaction strength on a single qubit is set to \eqref{eq:g}: 
\begin{align*}
g=F\|\hspace{-.1em}|h|\hspace{-.1em}\|_1=O(\|\hspace{-.1em}|h|\hspace{-.1em}\|_1).
\end{align*}

For geometrically-local Hamiltonians, $\Gamma=O(1)$ and thus $q=O(1)$. 
We have a constant degree
\begin{align*}
\quad g=O(1).
\end{align*}

For power-law Hamiltonians, the recursive decomposition scheme in~\cite{Low_2023} gives $\Gamma=O(\log N)$, and thus $q=O(\log N)$. 
Appendix F of~\cite{childs2021theory} directly calculates the induced 1-norm such that
\begin{align*}
g= \begin{cases}{O}\left(N^{1-\alpha / D}\right), & \text { for } 0 \leq \alpha<D, \\ {O}(\log N), & \text { for } \alpha=D, \\ {O}(1), & \text { for } \alpha>D.\end{cases}
\end{align*}

Next, we choose an appropriate Trotter number $r$ to ensure the whole long-time simulation error is no more than $\epsilon$. 
In the first case, if the sum of step errors is the dominant term, that is,
\begin{align*}
3(2q)^{p+1}(p_0g)^p\Delta'\frac{T^{p+1}}{r^p}\leq\frac{\epsilon}{2}.
\end{align*}
For both geometrically-local and power-law Hamiltonians, we have
\begin{align*}
r = \tilde{O}\left(g\frac{(\Delta+g\log(N/\epsilon))^{1/p}T^{1+1/p}}{\epsilon^{1/p}}\right).
\end{align*}
This expression can be obtained by replacing the full extensive energy scale $gN$ in the leading commutator contribution of the full Hilbert space product formula Trotter number with $\Delta+g\log(N/\epsilon)$. If the chosen low-energy subspace has $\Delta=O(1)$ and $g=O(1)$ for a geometrically local model, the explicit $N^{1/p}$ contribution in the leading term is replaced by a logarithmic dependence. If instead $\Delta$, $g$, or the adiabatic time $T$ scales polynomially with $N$, the final end-to-end scaling must include those dependencies.

In the second case, if the error containing the leakage term is dominant, that is,
\begin{align*}
(2k(p+1))^p\left(\sum_{j=1}^{p+1}\Gamma^{p+1-j}g^jN\frac{T^{p}}{r^{p}}\right)(1+\ln r)\left(2\sigma\frac{\|{\dot{H}}\|}{\gamma^2}+7\sigma\sqrt{\sigma} \frac{\|{\dot{H}}\|^2}{\gamma^3}+\sigma\frac{\|{\ddot{H}}\|}{\gamma^2}\right) \leq \frac{\epsilon}{2}.
\end{align*}
It suffices to choose
\begin{align*}
r = \tilde{O}\Bigg(g^{1+1/p}\frac{N^{1/p}T}{\epsilon^{1/p}}{\left(\sigma\frac{\|{\dot{H}}\|}{\gamma^2}+\sigma\sqrt{\sigma} \frac{\|{\dot{H}}\|^2}{\gamma^3}+\sigma\frac{\|{\ddot{H}}\|}{\gamma^2}\right)^{1/p}}\Bigg).
\end{align*}

\subsection{Adiabatic state preparation}
In the adiabatic regime where
\begin{align}\label{eq:adiabatic}
& T = \Omega\left(\frac{gN}{\Delta}\cdot\left(\sigma\frac{\|{\dot{H}}\|}{\gamma^2}+\sigma\sqrt{\sigma} \frac{\|{\dot{H}}\|^2}{\gamma^3}+\sigma\frac{\|{\ddot{H}}\|}{\gamma^2}\right)\right), 
\end{align}    
the full-space sufficient condition for \assum{2} discussed in \sec{analysis} is satisfied, and the first term in the Trotter number following from \eqref{eq:longtime_err_general}, induced by the sum of short-time simulation errors, is dominant. We thus have the following simplified bound:
\begin{theorem}[Trotter number for simulating adiabatic state preparation]\label{thm:atrotternumber}
Choose a smooth adiabatic interpolation $H(t)$ of the form \eqref{eq:adiabatic_interpolation} for which \assum{1} and \assum{2} hold. Its instantaneous Hamiltonians are $N$-qubit, $k$-local multi-linear Hamiltonians as in \eqref{eq:multilinear}, with the local interaction terms shifted to be positive semi-definite. Let $\Pi_\sigma(t)$ project onto the lowest $\sigma$ instantaneous eigenstates, assume that this subspace is separated from its complement by a uniform gap $\gamma>0$, and set $\Delta=\max_{t\in[0,T]}E_\sigma(t)$. Let the interaction strength on each qubit be upper bounded by $g$ as defined in \eqref{eq:g}.

For evolution times in the adiabatic regime satisfying
\begin{align*}
T = \Omega\left(\frac{gN}{\Delta}\cdot\left(\sigma\frac{\|{\dot{H}}\|}{\gamma^2}+\sigma\sqrt{\sigma}\frac{\|{\dot{H}}\|^2}{\gamma^3}+\sigma\frac{\|{\ddot{H}}\|}{\gamma^2}\right)\right),
\end{align*}
it suffices to choose
\begin{equation*}
r = \tilde{O}\left(g\frac{(\Delta+g\log(N/\epsilon))^{1/p}T^{1+1/p}}{\epsilon^{1/p}}\right),
\end{equation*}
so that both the $p$-th order generalized and standard time-dependent product formulas satisfy
\begin{align*}
\left\|(U_p(T,0)-V(T,0))\Pi_\sigma(0)\right\|&\leq\epsilon, &
\left\|(S_p(T,0)-V(T,0))\Pi_\sigma(0)\right\|&\leq\epsilon,
\end{align*}
where $V(T,0)=\mathcal{T}\exp(-i\int_0^T H(\tau)\dd{\tau})$. Thus, every initial state supported on $\Pi_\sigma(0)$ is simulated within error $\epsilon$. The leading product-formula contribution replaces the full extensive scale by the effective low-energy scale $\Delta+g\log(N/\epsilon)$.
\end{theorem}

\section{Numerical Experiments}
\subsection{Experimental setup}

We supplement the theoretical bounds with a numerical simulation for adiabatic state preparation. 
The calculations were performed on a MacBook Air equipped with an eight-core Apple M2 processor and 16 GB of memory, using Python 3.11.12.
Specifically, we consider a time-dependent $X$-initial nearest-neighbor Heisenberg interpolation
\begin{align*}
H(s)=-\sum_{i=1}^{N}X_i+sJ\sum_{i=1}^{N-1}(X_iX_{i+1}+Y_iY_{i+1}+Z_iZ_{i+1})+s\eta\sum_{i=1}^{N}Z_i,\quad s=t/T,
\end{align*}
with initial state $\ket{\psi_0(0)}=\ket{+}^{\otimes N}$. We set $J=0.25$ and $\eta=0.025$. The transverse field fixes a simple product ground state at the beginning of the path, the exchange interaction with strength $J$ introduces correlations, and the weak longitudinal field characterized by $\eta$ removes the special common-eigenstate structure of the untilted model. We choose $T=80$ by reading \eqref{eq:adiabatic} without the asymptotic $\Omega(\cdot)$ notation, for a spectral gap $\gamma$ on the order of $0.1$, to study finite-time adiabatic evolution. 
These parameters are kept fixed across all system sizes, product-formula orders, and diagnostics; only the displayed sweep variable is changed. Panels (a), (b), and (d) use $N=4,\ldots,10$, while panel (c) fixes $N=6$ and sweeps the product-formula step number. We compare second- and fourth-order product formulas.

We split the product formula into three implementable blocks, generated respectively by the $X+XX$, $YY$, and $ZZ+Z$ terms; the last block is diagonal. The exact evolution $V$ is computed with the eighth-order DOP853 integrator using relative and absolute tolerances $10^{-12}$ and $10^{-14}$, respectively, and the product-formula evolution is denoted by $S$. We compare the full operator error $\norm{V-S}$ with the initial-state projected error $\norm{(V-S)\ket{\psi_0(0)}}$, and use phase-aligned state-vector distances to assess preparation of the final ground-state ray. The most computationally demanding step is the explicit construction of the full exact-evolution operator by the ODE solver.

\subsection{Numerical results}
Panel (a) of \fig{heis_numerics} shows that, for $N\geq4$, the projected error is much smaller than the full operator-norm error. For example, at $N=10$, $r=256$, and second order, $\norm{V-S}\approx1.88\times10^{-1}$ while $\norm{(V-S)\ket{\psi_0(0)}}\approx7.37\times10^{-3}$. Panel (b) converts this error separation into a reduction in the required Trotter number. For $\epsilon=0.01$ and $N=10$, the fourth-order formula gives $r_{\min}^{\mathrm{full}}=109$ and $r_{\min}^{\mathrm{proj}}=77$, while the second-order formula gives $r_{\min}^{\mathrm{full}}=1108$ and $r_{\min}^{\mathrm{proj}}=220$.

Panel (c) shows that, for $N=6$, the state-preparation error converges to $3.81\times10^{-4}$ as $r$ increases for both second- and fourth-order product formulas. Panel (d) identifies this limiting value with the intrinsic exact adiabatic error and shows that the corresponding error floor ranges from about $3.1\times10^{-4}$ at $N=4$ to $4.9\times10^{-4}$ at $N=10$.

Taken together, these observations are qualitatively consistent with \thm{atrotternumber}. Panels (a) and (b) exhibit, at finite size, the central distinction underlying the theorem: for an input supported on the initial low-energy sector, the projected simulation error relevant to its evolution can be substantially smaller than the worst-case full operator-norm error, so the same target accuracy can be reached with fewer product-formula steps. 
Moreover, \thm{atrotternumber} controls the product-formula error relative to the exact finite-time evolution $V$, not the intrinsic error of $V$ as an adiabatic state-preparation procedure. The agreement between the plateau in panel (c) and the exact-evolution error in panel (d) therefore separates these two sources of error: increasing $r$ suppresses the product-formula contribution, while the finite-$T$ adiabatic floor remains.

\begin{figure}[H]
    \centering
    \begin{subfigure}[t]{0.48\linewidth}
        \includegraphics[width=\linewidth]{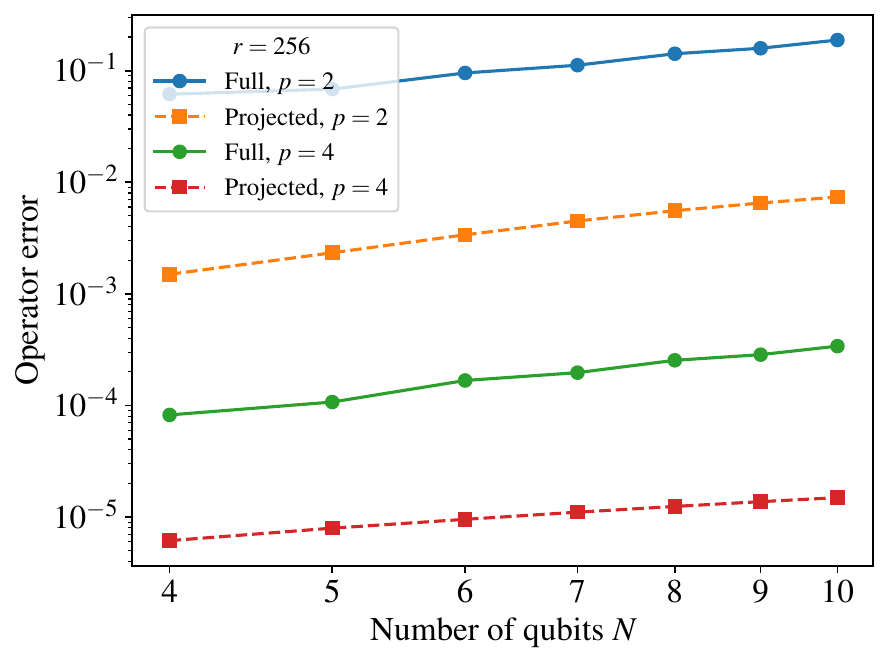}
        \caption{Operator norm errors at fixed $r=256$.}
        \label{fig:heis_operator}
    \end{subfigure}\hfill
    \begin{subfigure}[t]{0.48\linewidth}
        \includegraphics[width=\linewidth]{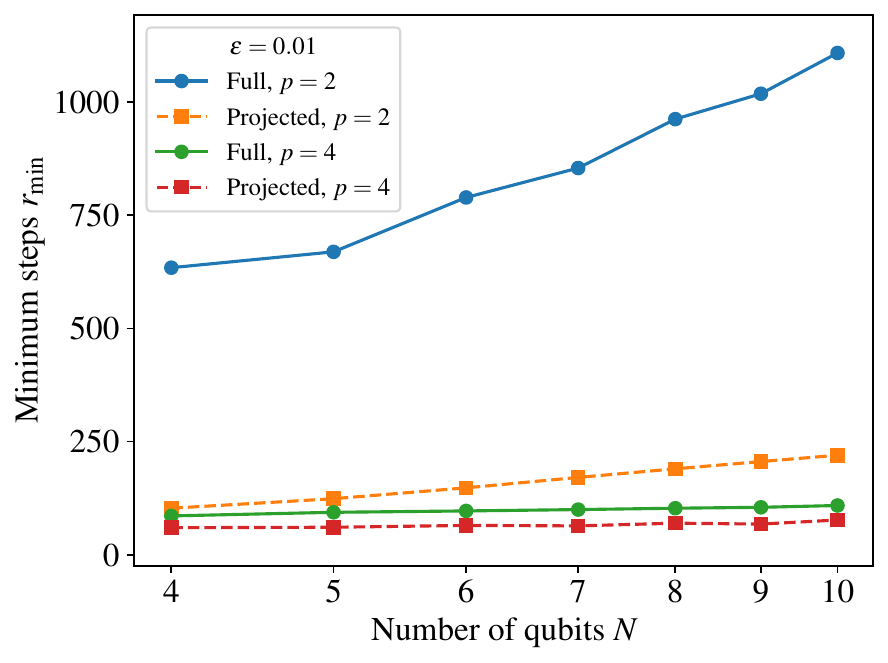}
        \caption{$r_{\min}$ required for target precision $\epsilon=0.01$.}
        \label{fig:heis_rmin}
    \end{subfigure}\par\medskip
    \begin{subfigure}[t]{0.48\linewidth}
        \includegraphics[width=\linewidth]{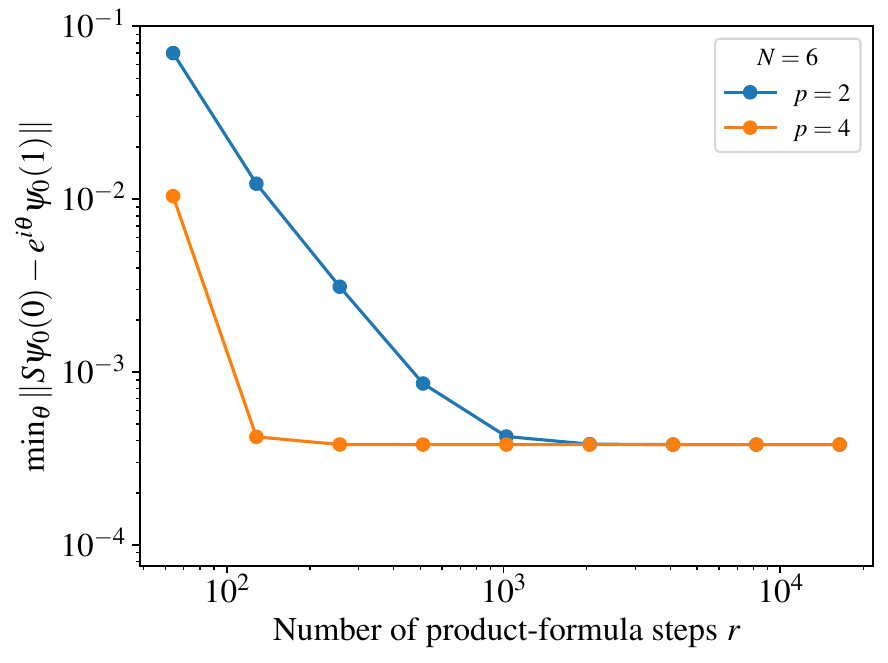}
        \caption{State-preparation error versus $r$ at $N=6$.}
        \label{fig:heis_state}
    \end{subfigure}\hfill
    \begin{subfigure}[t]{0.48\linewidth}
        \includegraphics[width=\linewidth]{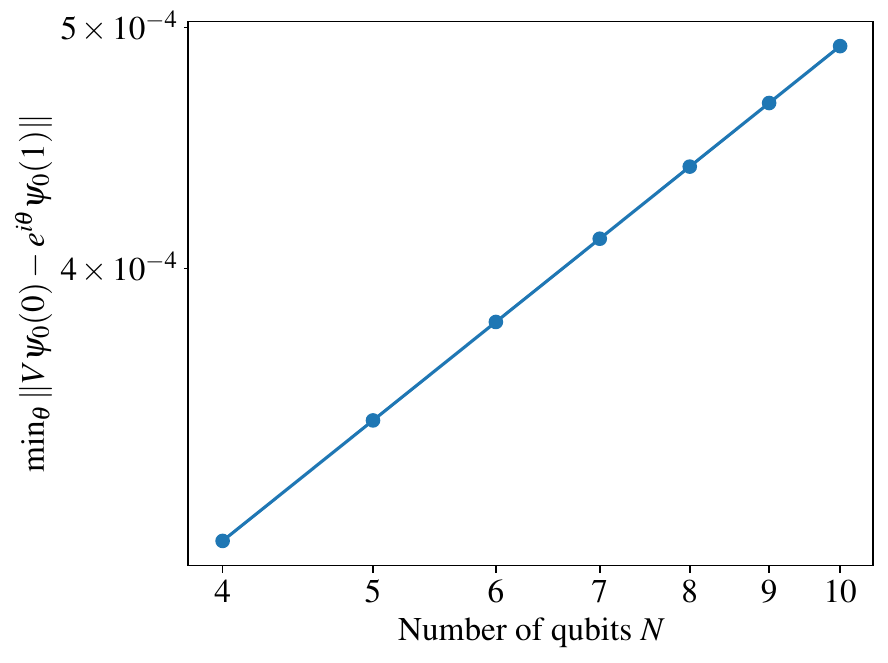}
        \caption{Intrinsic exact adiabatic error versus $N$.}
        \label{fig:heis_exact}
    \end{subfigure}
    \caption{Numerical experiments for the $X$-initial Heisenberg interpolation with a weak diagonal tilt. Panel (a) compares the full error $\norm{V-S}$ and projected error $\norm{(V-S)\ket{\psi_0(0)}}$ for second- and fourth-order product formulas. Panel (b) shows that the projected criterion reaches $\epsilon=0.01$ with fewer steps than the full operator criterion for both orders. Panels (c) and (d) report phase-aligned state-vector distances. This phase alignment removes the physically irrelevant dynamical/geometric phase accumulated during the adiabatic evolution, as well as the arbitrary phase convention of the numerically computed final eigenvector $\ket{\psi_0(1)}$. Panel (c) measures the state-preparation error $\min_\theta\norm{S\ket{\psi_0(0)}-e^{i\theta}\ket{\psi_0(1)}}$ relative to the target ray. Panel (d) depicts the intrinsic exact adiabatic error $\min_\theta\norm{V\ket{\psi_0(0)}-e^{i\theta}\ket{\psi_0(1)}}$, which sets the large-$r$ floor approached in panel (c).}
    \label{fig:heis_numerics}
\end{figure}

\FloatBarrier

\section{Lower Bound for Simulating Time-Dependent Hamiltonian} \label{sec:lower}
We remark that the lower bound of query complexity concerning generic time-dependent sparse Hamiltonian simulations is implicitly mentioned in the discussion section of~\cite{Berry_2020}. 
Here, we provide a rigorous proof for the full Hilbert space.
The hard instance combines the sparse-Hamiltonian parity construction of~\cite{Berry_2015} with the precision-dependent lower-bound argument of~\cite{berry2014exponential}, yielding a time-dependent construction that captures both the integrated max-norm contribution and the $\log(1/\epsilon)/\log\log(1/\epsilon)$ term.

\newpage
\begin{theorem}[Hard instance] \label{thm:lower}
For any $\epsilon, T>0$, integer $d \geq 2$, and fixed function of $\|H(t)\|_{\max }$ over $t\in[0,T]$, there exists a d-sparse time-dependent Hamiltonian $H(t)$ such that simulating $H(t)$ for time $t\in[0,T]$ within precision $\epsilon$ requires query complexity
\begin{align*}
\Omega\left(d\int_0^T\norm{H(\tau)}_{\max}\dd{\tau} +\frac{\log(1/\epsilon)}{\log\log(1/\epsilon)}\right).
\end{align*}
\end{theorem}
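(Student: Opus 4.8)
The plan is to establish the two terms of the lower bound by separate reductions and then fuse them into a single hard instance, using the elementary fact that $\Omega(A+B)=\Omega(\max\{A,B\})$. Both reductions route through the \textsc{Parity} problem on $n$ bits, which requires $\Omega(n)$ queries on any bounded-error quantum (or probabilistic) machine~\cite{beals1998quantumlowerboundspolynomials,farhi1998limit}. The governing principle is that one sparse-oracle query to $H(t)$ can be made to reveal only $O(1)$ bits of the hidden input, so an efficient simulation would contradict the \textsc{Parity} query bound.

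For the first (time-integral) term I would start from the time-independent hard instance $H_0$ of Berry et al.~\cite{Berry_2015}: a $d$-sparse Hamiltonian encoding a string $x\in\{0,1\}^n$ through its sparse oracle, for which the evolution $e^{-iH_0\Phi}$ over effective time $\Phi$ lets one read off $\mathrm{PARITY}(x)$, with the number of decodable bits scaling as $n=\Theta(d\norm{H_0}_{\max}\Phi)$, forcing $\Omega(d\norm{H_0}_{\max}\Phi)$ simulation queries. To promote this to the prescribed, possibly time-varying norm profile, I would take the genuinely time-dependent instance $H(t)=f(t)H_0$ with scalar profile $f(t)=\norm{H(t)}_{\max}/\norm{H_0}_{\max}\ge 0$. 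Since $H(t)$ commutes with itself at all times, the time-ordered exponential collapses to $V(T,0)=\exp(-iH_0\int_0^T f(\tau)\dd{\tau})$, i.e.\ an evolution of $H_0$ for effective time $\Phi=\int_0^T f(\tau)\dd{\tau}$; and because one query to $H(t)$ equals one query to $x$ after dividing out the known scalar $f(t)$, the \textsc{Parity} bound transfers verbatim. Using $\int_0^T\norm{H(\tau)}_{\max}\dd{\tau}=\norm{H_0}_{\max}\Phi$, this yields $\Omega(d\int_0^T\norm{H(\tau)}_{\max}\dd{\tau})$; the linear case $f(t)\propto t$ is precisely the ``linear time-dependent Hamiltonian'' of the technical overview.

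For the second (precision) term I would invoke the subset argument: time-independent Hamiltonians are a special case of time-dependent ones, so any generic time-dependent algorithm also simulates the time-independent precision-hard instance of~\cite{Berry_2015} and inherits its lower bound $\Omega(\log(1/\epsilon)/\log\log(1/\epsilon))$; a constant (or scalar-rescaled) slice of the profile supplies a time-dependent witness. To package both terms into one Hamiltonian matching the fixed function $\norm{H(t)}_{\max}$, I would take the block-diagonal sum of the two instances, each rescaled so that the combined max-norm follows the prescribed profile. Block-diagonality preserves $d$-sparsity (the $2$-sparse precision block is absorbed since $d\ge 2$) and the norm function, while a query to the sum returns entries of exactly one block; restricting the algorithm's queries to each block shows that simulating the sum to error $\epsilon$ is at least as hard as simulating either block, so the query count is $\Omega(\max\{A,B\})=\Omega(A+B)$.

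The main obstacle I anticipate is the first term: checking that the \textsc{Parity}-encoding gadget survives the scalar reparametrization, namely that it stays $d$-sparse, that an arbitrary prescribed profile $\norm{H(t)}_{\max}$ does not disturb the $\Theta(d\cdot\text{effective time})$ counting of decodable bits, and that the sparse-query-to-bit-query reduction remains $O(1)$ per query. The precision term, by contrast, should follow almost for free through the subset argument, so the real bookkeeping lies in the parity reduction and the direct-sum assembly.
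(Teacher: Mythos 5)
Your proposal is correct in substance, and for the dominant term it is the same argument the paper gives: the paper's hard instance is exactly your scalar-profile construction $H(t)=\|H(t)\|_{\max}\cdot\tilde{H}/\|\tilde{H}\|_{\max}$ with $\tilde{H}$ the parity-walk Hamiltonian of Lemma 12 of~\cite{Berry_2015}, combined with the same two observations you make — that $[H(t),H(t')]=0$ collapses the time-ordered exponential into an evolution of $\tilde{H}$ for effective time $\int_0^T\|H(\tau)\|_{\max}\dd{\tau}$, and that each sparse query to $H(t)$ reveals one oracle query's worth of the hidden string — followed by the $\Omega(N)$ parity bound of~\cite{beals1998quantumlowerboundspolynomials,farhi1998limit}. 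Where you genuinely diverge is the precision term. The paper uses no second instance and no direct sum: it extracts both terms from the \emph{single} instance via the amplitude constraint $\left(\sin\left(d'\int_0^T\|H(\tau)\|_{\max}\dd{\tau}/(N\|\tilde{H}\|_{\max})\right)\right)^N\geq\epsilon$, which an $\epsilon$-accurate simulation must respect for the parity reduction to survive; taking logarithms and using $\sin x\approx x$ shows $N$ may be taken $\Theta\bigl(\max\{d'\int_0^T\|H(\tau)\|_{\max}\dd{\tau},\,\log(1/\epsilon)/\log\log(1/\epsilon)\}\bigr)=\Theta(\text{sum})$, so $\Omega(N)$ already contains both terms at once. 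Your block-diagonal packaging is sound — sparsity and max-norm of a direct sum are the maxima over blocks, and fixing one block's hidden string to a known value restricts the instance family so the worst case inherits $\Omega(\max\{A,B\})$ — but it is machinery the unified analysis simply does not need.

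One caution: the precision term does not come ``almost for free'' from the subset argument, and this is precisely the bookkeeping your plan defers. Because the theorem prescribes the function $\|H(t)\|_{\max}$, the effective evolution time of your precision block is forced to be at most $\int_0^T\|H(\tau)\|_{\max}\dd{\tau}$ (rescaling a block's norm downward only shortens it further). You can cite the time-independent precision bound of~\cite{Berry_2015} as a black box only when this forced effective time is at least the evolution time at which their bound is established; otherwise you must redo the $(\sin x)^N\geq\epsilon$ computation at the forced time — and that computation is exactly where the quantity $\log(1/\epsilon)/\log\log(1/\epsilon)$ originates. It is the same calculation the paper performs, so nothing in your route fails, but it should be carried out rather than waved through. (Both your plan and the paper's proof also implicitly assume $\int_0^T\|H(\tau)\|_{\max}\dd{\tau}$ is not degenerately small; otherwise the identity circuit is already an $\epsilon$-accurate zero-query simulation and the precision term cannot be forced.)
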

\begin{proof}
We consider the following hard instance
\begin{align*}
{H}(t) = \|H(t)\|_{\max }\cdot \tilde{H}/\|\tilde{H}\|_{\max},\quad 0\leq t\leq T,
\end{align*}
where $\tilde{H}$ follows the construction in Lemma 12 of~\cite{Berry_2015}. Given a string $x\in\{0,1\}^N$, let $d'=\lfloor \frac{d}{2}\rfloor$, we consider $2d'$-sparse Hamiltonian $\tilde{H}$ acts on $\ket{i,j,\ell}$ with $i \in\{0, \dots, N\}, j \in\{0,1\}$, and $\ell \in[d']$, whose non-zero entries are
\begin{equation*}
\langle i-1, j, \ell| \tilde{H}\left|i, j \oplus x_i, \ell^{\prime}\right\rangle=\left\langle i, j \oplus x_i, \ell^{\prime}\right| \tilde{H}|i-1, j, \ell\rangle=\sqrt{i(N-i+1)} / N.
\end{equation*}

Note that $[H(t), H(t')]=0$ holds for any time over $[0,T]$ due to linear scaling, the Dyson series or time-ordered exponential reduces to a case similar to the time-independent scenario, that is,
\begin{align*}
\mathcal{T}\exp\left(-i\int_0^T {H}(\tau)\dd{\tau}\right) = \exp\left(-i\int_0^T \norm{H(\tau)}_{\max} \dd{\tau}\cdot \tilde{H}/\|\tilde{H}\|_{\max}\right).
\end{align*}

We start the simulation from the state $|0,0, *\rangle$, where $|i, j, *\rangle:=\frac{1}{\sqrt{d'}} \sum_{\ell}|i, j, \ell\rangle$ denotes a uniform superposition over the third register. 
The subspace $\operatorname{span}\{|i, j, *\rangle: i \in\{0, \dots, N\}$, $j \in\{0,1\}\}$ is an invariant subspace of $\tilde{H}$. 
Since the initial state lies within this subspace, the quantum walk remains confined to this subspace. 
The non-zero matrix elements of $\tilde{H}$ in this invariant subspace are
\begin{gather*}
\langle i-1, j, *| \tilde{H}\left|i, j \oplus x_i, *\right\rangle=\left\langle i, j \oplus x_i, *\right| \tilde{H}|i-1, j, *\rangle=d' \sqrt{i(N-i+1)} / N.
\end{gather*}

 Note that $\|\tilde{H}\|_{\max}=\Theta(1)$, simulating ${H}(t)$ over time $[0,T]$ yields an unbounded-error algorithm determining the parity of string $p(x)=x_1\oplus x_2\oplus\cdots\oplus x_N$ since
\begin{align*}
\bra{N, p(x), *}\mathcal{T}\exp\left(-i\int_0^T {H}(\tau)\dd{\tau}\right)|0,0, *\rangle = \left(\sin (d'\int_0^T \norm{H(\tau)}_{\max}/(N\cdot \|\tilde{H}\|_{\max}))\right)^N,
\end{align*}
while 
\begin{align*}
\bra{N, p(x)\oplus 1, *}\mathcal{T}\exp\left(-i\int_0^T {H}(\tau)\dd{\tau}\right)|0,0, *\rangle = 0.
\end{align*}

If we measure $\mathcal{T}e^{-i\int_0^T {H}(\tau)\dd{\tau}}|0,0, *\rangle$ using the computational basis for the first two registers and an orthogonal basis containing $\ket{*}$ for the third register, then the third register must be exactly $\ket{*}$ due to aforementioned invariant subspace. 
If the first register is not $N$, we output $0$ or $1$ with equal probability. If the first register is $N$, we output the value of the second register. This is an unbounded-error algorithm of computing parity $p(x)$ with success probability $>\frac{1}{2}$, and thus requires $\Omega(N)$ queries~\cite{beals1998quantumlowerboundspolynomials,farhi1998limit}.

Since we allow that the implemented quantum circuit $U$ can deviate from $\mathcal{T}\exp(-i\int_0^T {H}(\tau)\dd{\tau})$ up to $\epsilon$ , it suffices to require~\cite{berry2014exponential}
\begin{align*}
\left(\sin (d'\int_0^T \norm{H(\tau)}_{\max}/(N\cdot \|\tilde{H}\|_{\max}))\right)^N\geq \epsilon.
\end{align*}

By taking the logarithm on both sides and using the approximation $\sin x\approx x$, it suffices to set
$$N=\Theta\left(\max\left\{d'\int_0^T\norm{H(\tau)}_{\max}\dd{\tau},\frac{\log(1/\epsilon)}{\log\log(1/\epsilon)}\right\}\right).$$

Above all, if we construct this kind of Hamiltonian with $N=\Theta(d\int_0^T\norm{H(\tau)}_{\max}\dd{\tau} +\frac{\log(1/\epsilon)}{\log\log(1/\epsilon)})$, simulating it over $[0,T]$ with error up to $\epsilon$ at least requires query complexity
\begin{align*}
\Omega(N)=\Omega\left(d\int_0^T\norm{H(\tau)}_{\max}\dd{\tau} +\frac{\log(1/\epsilon)}{\log\log(1/\epsilon)}\right).
\end{align*}

Thus, there is no generic quantum algorithm for time-dependent Hamiltonian simulations that can exceed this lower bound.
\end{proof}

We remark that the rescaled Dyson series~\cite{Berry_2020} achieves query complexity as a product of the above two parts, just as the former truncated Taylor series algorithm for the time-independent scenario. 
We hope that, in the future, a time-dependent analogue to quantum signal processing might match the lower bounds in all the above parameters.

\section*{Acknowledgements}
We thank Dong An, Boyang Chen, Yulong Dong, Minbo Gao, Lin Lin, Kaoru Mizuta, Burak Şahinoğlu, and Xinzhao Wang for helpful discussions. S.Z. and T.L. were supported by the National Natural Science Foundation of China (Grant Numbers 62372006 and 92365117).

\providecommand{\bysame}{\leavevmode\hbox to3em{\hrulefill}\thinspace}


\clearpage
\appendix

\section{Floquet Theory for Time-Dependent Hamiltonian Simulation}\label{append:floquet}
For a generic smooth time-dependent Hamiltonian $H(t)=\sum_{\gamma=1}^\Gamma H_\gamma(t)$ with each $H_\gamma(t)\in C^{p+2}$ over $t\in[0,T]$, Appendix A of~\cite{mizuta2024expliciterrorboundscommutator} extends each $H_\gamma(t)$ into a periodic Hamiltonian $H^{\ex}_\gamma(t)$ using a bump function as follows:
\begin{equation*}
H_\gamma^{\mathrm{ex}}(t)= \begin{cases}H_\gamma(t) & (t \in[0, T]) \\ \sum_{n=0}^{p+2} \frac{H_\gamma^{(n)}(t-T)^n}{n!} c\left(7-\frac{6 t}{T}\right) & \left(t \in\left(T, \frac{4}{3} T\right)\right) \\ 0 & \left(t \in\left[\frac{4}{3} T, \frac{5}{3} T\right]\right) \\ \sum_{n=0}^{p+2} \frac{H_\gamma^{(n)}(0)(t-2T)^n}{n!} c\left(\frac{6t}{T}-11\right) & \left(t \in\left(\frac{5}{3} T, 2 T\right)\right) ,\end{cases}
\end{equation*}
where the period $T^\ex=2T$ is independent of index $\gamma$, $H^{\ex}(t)=\sum_{\gamma=1}^\Gamma H^\ex_\gamma(t)$, and $c(t)$ is given by
\begin{equation*}
c(t)=\frac{\int_0^t \mathrm{~d} \tau b(\tau) b(1-\tau)}{\int_0^1 \mathrm{~d} \tau b(\tau) b(1-\tau)},\quad b(t)= \begin{cases}e^{-1 / t} & (t>0) \\ 0 & (t \leq 0)\end{cases}.
\end{equation*}

In this appendix, we abuse the notation $H_\gamma(t)$ to denote the periodic Hamiltonian without causing ambiguity, which admits a Fourier series:
\begin{align}
H_\gamma(t) =\sum_{m} H_{\gamma m}e^{-im\omega t},\quad \omega=\frac{2\pi}{T^\ex}=\frac{\pi}{T}.\label{eq:Fourier}
\end{align}
Given the $C^{p+2}$ condition, repeating integration by parts yields $\left\|H_{\gamma m}\right\| \in{O}\left(|m|^{-p-2}\right)$, which ensures absolute and uniform convergence of the $p$-th order derivatives series expansion. Similarly,
\begin{align*}
    H(t) =\sum_{m} H_{ m}e^{-im\omega t}, \quad H_m=\sum_{\gamma=1}^\Gamma H_{\gamma m}.
\end{align*}

Floquet theory relates the evolution operator of a periodic Hamiltonian $H(t)$ with a time-independent Floquet Hamiltonian defined on an infinite-dimensional space~\cite{levante1995formalized, Mizuta_2023}
\begin{align*}
H^F=\sum_{l \in \mathbb{Z}}\Big(\sum_m|l+m\rangle\langle l| \otimes H_m-l \omega|l\rangle\langle l| \otimes I\Big),
\end{align*}
where the ancilla system $\ket{l}$ contains the Fourier index $l\in \mathbb{Z}$. It is often separated into two terms $H^F=H^{\mathrm{Add}}-H_{\mathrm{LP}}$ with
\begin{align*}
    H^{\text{Add}}=\sum_m \text{Add}_m\otimes H_m,\quad \text{Add}_m=\sum_{l\in\mathbb{Z}}\ket{l+m}\bra{l},
\end{align*}
and the linear potential term
\begin{align*}
    H_{\mathrm{LP}}=\sum_{l\in\mathbb{Z}}l \omega|l\rangle\langle l| \otimes I.
\end{align*}

The exact evolution operator of $H(t)$, not necessarily over a whole period, is equivalent to 
\begin{align*}
V(t,0)=\sum_{l\in\mathbb{Z}} e^{-il\omega t} \langle l| e^{-i H^Ft}|0\rangle,
\end{align*}

On the other hand, one can similarly construct a Floquet Hamiltonian $H_\gamma^F$ for each $H_\gamma(t)$ by
\begin{align*}
    H^{F}_\gamma=H^{\text{Add}}_\gamma-H_{\mathrm{LP}},\quad H^{\text{Add}}_\gamma=\sum_m \text{Add}_m\otimes H_{\gamma m}.
\end{align*}
Then the $H^{F}$ can be split in the following two ways:
\begin{align}
H^F & =\sum_{\gamma=1}^{\Gamma} H_\gamma^{\mathrm{Add}}-H^{\mathrm{LP}}\label{eq:24} \\
& =\sum_{\gamma=1}^{\Gamma} H_\gamma^F+(\Gamma-1) H^{\mathrm{LP}} \label{eq:25}.
\end{align}

With the coefficients $\{\alpha_j\}$ and $\{\beta_j\}$ of the time-dependent generalized product formula~\eqref{eq:GPF}, Ref.~\cite{mizuta2024expliciterrorboundscommutator} defines the following time-independent product formula with respect to \eqref{eq:25}:
\begin{align*}
    T^F(t)
    & = e^{-i H_{\mathrm{LP}}\left(\beta_q+\alpha_q-1\right) t}e^{-i H_{\gamma_q}^F \alpha_q t} \prod_{j \leq q-1}^{\leftarrow}\left(e^{-i H_{\mathrm{LP}}\left(\beta_j+\alpha_j-\beta_{j+1}\right) t} e^{-i H_{\gamma_j}^F \alpha_j t}\right)e^{-iH_{\mathrm{LP}}\beta_1t}\\
    & =e^{-i H_{\gamma_q}^F \alpha_q t} \prod_{j \leq q-1}^{\leftarrow}\left(e^{-i H_{\mathrm{LP}}\left(\beta_j+\alpha_j-\beta_{j+1}\right) t} e^{-i H_{\gamma_j}^F \alpha_j t}\right),
\end{align*}
where the last equality follows end time $\beta_q+\alpha_q=1$ and start time $\beta_1=0$ since we focus on the Trotter-Suzuki formula. There are in total $2q-1$ exponentials. One can use simplified notations
\begin{align}
T^F(t)=\prod_{j^{\prime}=1,2,\dots,2q-1}^{\leftarrow}e^{-i\tilde{H}^F_{{j^{\prime}}}\tilde{\alpha}_{{j^{\prime}}}t},\quad
(\tilde{H}_{j^{\prime}}^F, \tilde{\alpha}_{j^{\prime}})= \begin{cases}(H_{\gamma_j}^F, \alpha_j) & (j^{\prime}=2 j-1) \\ (-H_{\mathrm{LP}}, \beta_{j+1}-\beta_j-\alpha_j) & (j^{\prime}=2 j)\end{cases}.\label{eq:tilde}
\end{align}
This product formula in the Floquet-Hilbert space is related to the original Hilbert space through

\begin{lemma}[Theorem 4 of~\cite{mizuta2024expliciterrorboundscommutator}] The generalized product formula~\eqref{eq:GPF} is equivalent to
    \begin{align*}
        U_p(t,0)=\sum_{l\in\mathbb{Z}} e^{-il\omega t} \langle l|T^F(t)|0\rangle,
    \end{align*}
\end{lemma}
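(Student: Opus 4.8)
The plan is to recast the claimed identity through a single linear ``evaluation map'' from the Floquet--Hilbert space back to the system space, and then to peel the product formula apart one factor at a time. Concretely, I would define $\mathcal{F}_t \coloneqq \sum_{l\in\mathbb{Z}} e^{-il\omega t}\langle l|$, so that the basic Floquet relation quoted above reads $V(t,0)=\mathcal{F}_t\, e^{-iH^F t}|0\rangle$ and the target identity becomes simply $U_p(t,0)=\mathcal{F}_t\, T^F(t)|0\rangle$. Everything then reduces to understanding how $\mathcal{F}_t$ commutes past the two elementary building blocks of $T^F(t)$ in \eqref{eq:tilde}: the linear-potential factors $e^{\pm iH_{\mathrm{LP}}\sigma}$ and the single-term Floquet factors $e^{-iH_\gamma^F\tau}$.

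For the linear-potential block I would verify the clock-advance identity $\mathcal{F}_s\, e^{-iH_{\mathrm{LP}}\sigma}=\mathcal{F}_{s+\sigma}$, which is immediate from $\langle l|e^{-iH_{\mathrm{LP}}\sigma}=e^{-il\omega\sigma}\langle l|$. For the single-term block I would prove the intertwining identity
\begin{equation*}
\mathcal{F}_t\, e^{-iH_\gamma^F\tau}=U_\gamma(t,t-\tau)\,\mathcal{F}_{t-\tau},\qquad U_\gamma(t',t)=\mathcal{T}\exp\!\left(-i\int_t^{t'}H_\gamma(\tau')\dd{\tau'}\right),
\end{equation*}
by an ODE-uniqueness argument: both sides equal $\mathcal{F}_t$ at $\tau=0$, and differentiating in $\tau$ shows both obey $\tfrac{\mathrm{d}X}{\mathrm{d}\tau}=-iX H_\gamma^F$. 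The differentiation reduces the claim to two auxiliary relations, $\mathcal{F}_s\,H_\gamma^{\mathrm{Add}}=H_\gamma(s)\,\mathcal{F}_s$ and $\mathcal{F}_s\,H_{\mathrm{LP}}=i\tfrac{\mathrm{d}}{\mathrm{d}s}\mathcal{F}_s$; the first follows from $\langle l|\mathrm{Add}_m=\langle l-m|$ combined with the Fourier expansion $H_\gamma(s)=\sum_m H_{\gamma m}e^{-im\omega s}$, while the second is a direct computation. Feeding $H_\gamma^F=H_\gamma^{\mathrm{Add}}-H_{\mathrm{LP}}$ through these turns the explicit time dependence of $H_\gamma(s)$ and the clock derivative into exactly the right-multiplication by $-iH_\gamma^F$, forcing equality.

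With both identities in hand, the remaining work is bookkeeping on the clock variable. I would apply $\mathcal{F}_t$ on the left of $T^F(t)$ and peel factors from the outermost inward. The leftmost factor $e^{-iH_{\gamma_q}^F\alpha_q t}$ converts into $U_{\gamma_q}(t,\beta_q t)\,\mathcal{F}_{\beta_q t}$ via the intertwining identity together with $\beta_q=1-\alpha_q$; the subsequent linear-potential factor, whose exponent is $\beta_q+\alpha_q-\beta_{q-1}$ times $t$, advances the clock from $\beta_q t$ to $(\beta_{q-1}+\alpha_{q-1})t$; the next single-term factor then produces $U_{\gamma_{q-1}}((\beta_{q-1}+\alpha_{q-1})t,\beta_{q-1}t)\,\mathcal{F}_{\beta_{q-1}t}$, and so on. By induction each matched pair of Floquet factors reproduces exactly one generalized-product-formula block $U_{\gamma_j}((\beta_j+\alpha_j)t,\beta_j t)$ in the correct order, with the clock terminating at $\mathcal{F}_{\beta_1 t}=\mathcal{F}_0$. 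Since $\beta_1=0$, $\beta_q+\alpha_q=1$, and $\mathcal{F}_0|0\rangle=I$, applying the accumulated operator to $|0\rangle$ yields precisely $U_p(t,0)$ as defined in \eqref{eq:GPF}.

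The main obstacle is the single-term intertwining identity: it is the point where the time-ordering in $U_\gamma$ must be matched against a genuinely time-independent Floquet exponential, and the ODE argument has to correctly trade the $H_{\mathrm{LP}}$ contribution (a clock derivative $i\tfrac{\mathrm{d}}{\mathrm{d}s}\mathcal{F}_s$) for the explicit time dependence of $H_\gamma(s)$ through the Fourier-mode relation. Once that identity is secured, the rest is careful but routine verification that the exponents $\beta_{j+1}-\beta_j-\alpha_j$ of the linear-potential factors are exactly the offsets that align the clock with each block's start time $\beta_j$. Throughout, the decay $\norm{H_{\gamma m}}\in O(|m|^{-p-2})$ guarantees absolute convergence of the Fourier series, so all the manipulations on the infinite-dimensional ancilla—in particular the reindexing $l\mapsto l+m$ underlying $\mathcal{F}_s H_\gamma^{\mathrm{Add}}=H_\gamma(s)\mathcal{F}_s$—are legitimate.
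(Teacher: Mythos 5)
Your proposal is correct. Note that the paper itself does not prove this lemma at all --- it is imported verbatim as Theorem~4 of the cited reference \cite{mizuta2024expliciterrorboundscommutator}, so there is no in-paper proof to compare against; what you have done is reconstruct, essentially from scratch, the Floquet-theoretic argument that lives in that reference. Your two key identities are exactly the right ones and both check out: the clock-advance relation $\mathcal{F}_s e^{-iH_{\mathrm{LP}}\sigma}=\mathcal{F}_{s+\sigma}$ is immediate from the diagonal form of $H_{\mathrm{LP}}$, and the intertwining relation $\mathcal{F}_t e^{-iH_\gamma^F\tau}=U_\gamma(t,t-\tau)\mathcal{F}_{t-\tau}$ follows from your ODE argument once one verifies $\mathcal{F}_s H_\gamma^{\mathrm{Add}}=H_\gamma(s)\mathcal{F}_s$ (the reindexing $l\mapsto l+m$ against the Fourier series) and $\mathcal{F}_s H_{\mathrm{LP}}=i\frac{\mathrm{d}}{\mathrm{d}s}\mathcal{F}_s$; these combine to give $\frac{\mathrm{d}}{\mathrm{d}\tau}\bigl[U_\gamma(t,t-\tau)\mathcal{F}_{t-\tau}\bigr]=-i\,U_\gamma(t,t-\tau)\mathcal{F}_{t-\tau}H_\gamma^F$, matching the ODE satisfied by $\mathcal{F}_t e^{-iH_\gamma^F\tau}$. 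The peeling bookkeeping is also right, including the boundary uses of $\beta_q+\alpha_q=1$, $\beta_1=0$, and $\mathcal{F}_0|0\rangle=I$; it correctly reproduces each block $U_{\gamma_j}((\beta_j+\alpha_j)t,\beta_j t)$ of \eqref{eq:GPF} in order, and it handles the higher-order Suzuki case ($\alpha_j<0$) automatically since the ODE argument is valid for negative $\tau$. Two small points: the exponent of the linear-potential factor you quote should read $\beta_{q-1}+\alpha_{q-1}-\beta_q$ rather than $\beta_q+\alpha_q-\beta_{q-1}$ (your described clock advance from $\beta_q t$ to $(\beta_{q-1}+\alpha_{q-1})t$ makes clear this is only a transcription slip), and the ODE-uniqueness step involves the unbounded operator $H_{\mathrm{LP}}$, so strictly speaking one should note that all manipulations are justified on the dense domain of finitely-supported ancilla vectors with the assumed decay $\norm{H_{\gamma m}}\in O(|m|^{-p-2})$ --- the same level of rigor as the source papers, and your closing remark already gestures at this.
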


As a corollary, by the variation-of-parameters formula, we can express the Trotter error by
\begin{align*}
U_p(t,0)-V(t,0) & =\sum_{l\in\mathbb{Z}} e^{-il\omega t} \langle l|( T^F(t)-e^{-i H^Ft})|0\rangle\\
& = i \sum_{l\in\mathbb{Z}} e^{-il\omega t} \langle l|\int_0^t\dd{\tau}  e^{-iH^F(t-\tau)} U_p^{^F}(\tau) \Delta^F(\tau)|0\rangle,
\end{align*}
where the residual term defined for the Floquet Hamiltonian is
\begin{align}
\Delta^F(\tau)= T^{F}(\tau)^\dagger H^FT^F(\tau) - iT^F(\tau)^\dagger\frac{\dd}{\dd \tau}T^F(\tau) .\label{eq:residual}
\end{align}
Specifically, for $T^F(\tau)$ in the form of \eqref{eq:tilde}, the residual term can be further calculated as
\begin{align*}
\hspace{-1em}\Delta^F(\tau) & =\sum_{\gamma=1}^{\Gamma} T^F(\tau)^{\dagger} H_\gamma^{\mathrm{Add}} T^F(\tau)\\ 
& -\sum_{j=1}^q\left(\beta_j+\alpha_j\right) T_{\leq 2j-1}^F(\tau)^{\dagger} H_{\gamma_j}^{\mathrm{Add}} T_{\leq 2j-1}^F(\tau)\\
& +\sum_{j=1}^q \beta_j T_{<2 j-1}^F(\tau)^{\dagger} H_{\gamma_j}^{\mathrm{Add}} T_{<2 j-1}^F(\tau).
\end{align*}

Moreover, by the translation symmetry, the error representation can be simplified to the form of~\eqref{eq:representation} in the main text:
\begin{lemma}[Variant of~{\cite[Theorem 8]{mizuta2024expliciterrorboundscommutator}}]\label{lem:representation}
    The additive error of the time-dependent product formula can be expressed as
\begin{align*}
U_p(t,0)-V(t,0) = i \int_0^t \mathrm{~d} \tau V(t, \tau) U_p(\tau, 0)\left(\sum_{l \in \mathbb{Z}}\langle l| \Delta^F(\tau)|0\rangle\right),
\end{align*}
\end{lemma}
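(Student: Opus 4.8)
The plan is to start from the Floquet variation-of-parameters identity established just above the statement,
\[
U_p(t,0)-V(t,0) = i\int_0^t \dd{\tau} \sum_{l\in\mathbb{Z}} e^{-il\omega t}\langle l|e^{-iH^F(t-\tau)}\,T^F(\tau)\,\Delta^F(\tau)|0\rangle,
\]
and to reorganize this Floquet-space integrand into the physical product $V(t,\tau)\,U_p(\tau,0)\,\big(\sum_l\langle l|\Delta^F(\tau)|0\rangle\big)$ purely by exploiting the Fourier-index translation symmetry of the Floquet-Hilbert space. First I would introduce the ancilla shift $\operatorname{Add}_n\otimes I$ and record its conjugation action on the two building blocks. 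Since $H^{\mathrm{Add}}$ is shift-invariant while $(\operatorname{Add}_n\otimes I)H_{\mathrm{LP}}(\operatorname{Add}_n\otimes I)^\dagger = H_{\mathrm{LP}}-n\omega$, every generator entering $H^F$ and $T^F(\tau)$ shifts by the same constant, so that $(\operatorname{Add}_n\otimes I)H^F(\operatorname{Add}_n\otimes I)^\dagger = H^F+n\omega$ and likewise for each $\tilde H^F_{j'}$ in \eqref{eq:tilde}. Using the telescoping identity $\sum_{j'=1}^{2q-1}\tilde\alpha_{j'}=\beta_q+\alpha_q-\beta_1=1$ (from $\beta_1=0$ and $\beta_q+\alpha_q=1$), this yields the covariance relations $\langle l'|e^{-iH^F s}|l\rangle = e^{il\omega s}\langle l'-l|e^{-iH^F s}|0\rangle$ and $\langle l'|T^F(\tau)|l\rangle = e^{il\omega\tau}\langle l'-l|T^F(\tau)|0\rangle$, with identical phase signs.

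Next I would insert resolutions of the identity $\sum_n|n\rangle\langle n|$ and $\sum_p|p\rangle\langle p|$ on the ancilla between the factors $e^{-iH^F(t-\tau)}$, $T^F(\tau)$, and $\Delta^F(\tau)$, and substitute the covariance relations. A short bookkeeping of the Fourier phases shows that the cross phases $e^{-in\omega\tau}$ and $e^{ip\omega\tau}$ cancel against the phases generated by shifting the summation indices, so the double sum factorizes as $W(t,\tau)\cdot G(\tau)$ with
\[
W(t,\tau)=\sum_{a}e^{-ia\omega t}\langle a|e^{-iH^F(t-\tau)}|0\rangle, \qquad G(\tau)=\Big(\sum_b e^{-ib\omega\tau}\langle b|T^F(\tau)|0\rangle\Big)\Big(\sum_p\langle p|\Delta^F(\tau)|0\rangle\Big).
\]
By the preceding Floquet representation of the product formula, $\sum_b e^{-ib\omega\tau}\langle b|T^F(\tau)|0\rangle = U_p(\tau,0)$, so $G(\tau)$ is exactly $U_p(\tau,0)\big(\sum_p\langle p|\Delta^F(\tau)|0\rangle\big)$, matching the desired right-hand side.

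It then remains to identify $W(t,\tau)$ with the physical propagator $V(t,\tau)$. Rather than compare directly with the $\tau=0$ formula for $V$, I would verify that $W$ solves the defining Cauchy problem: $W(\tau,\tau)=\sum_a e^{-ia\omega\tau}\delta_{a,0}=I$, and, differentiating in $t$, the linear-potential contribution produced by $\partial_t e^{-ia\omega t}$ cancels exactly the $-a\omega$ term coming from $\langle a|H_{\mathrm{LP}}$, leaving $i\partial_t W(t,\tau)=H(t)W(t,\tau)$ with $H(t)=\sum_m e^{-im\omega t}H_m$. Uniqueness of the solution of the Schr\"odinger equation then forces $W(t,\tau)=V(t,\tau)$, which completes the proof. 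The main obstacle I anticipate is precisely this phase bookkeeping together with the linear-potential cancellation: the covariance exponents for $e^{-iH^F s}$ and for $T^F(\tau)$ must carry identical signs — which is exactly why verifying $\sum_{j'}\tilde\alpha_{j'}=1$ is essential — for otherwise the $n$- and $p$-sums fail to decouple and $W$ fails to satisfy the forward equation that identifies it with $V$.
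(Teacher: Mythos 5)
Your proof is correct and takes essentially the same route as the paper: the paper derives the Floquet variation-of-parameters identity just above the lemma and then passes to the stated form "by the translation symmetry" (citing Theorem 8 of~\cite{mizuta2024expliciterrorboundscommutator}), which is precisely the covariance/factorization argument under $\operatorname{Add}_n$ that you carry out in detail, including the key facts $(\operatorname{Add}_n\otimes I)H^F(\operatorname{Add}_n\otimes I)^\dagger=H^F+n\omega$ and $\sum_{j'}\tilde{\alpha}_{j'}=1$. Your closing identification $W(t,\tau)=V(t,\tau)$ via uniqueness for the Cauchy problem, with the $H_{\mathrm{LP}}$ phase cancellation, correctly fills in the step the paper leaves implicit.
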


Next, we calculate the coefficients of Taylor expansion of $\Delta^F(\tau)$ to prove \lem{co}, and bound the remainder to prove \lem{remainder} in the main text.

\subsection{Proof of \lem{co}}
\begin{proof}
Now we consider the $(p_0-1)$-th order Taylor expansion of $\sum_{l \in \mathbb{Z}}\langle l| \Delta^F(\tau)|0\rangle$ \eqref{eq:expansion} with the integral form of the remainder. The order condition ensures that $\tau^0,\dots,\tau^{p-1}$ terms cancel, then
\begin{gather*}
    \Delta^F(\tau) =\sum_{n=p}^{p_0-1}\frac{\Delta^{F}(0)^{(n)}}{n!}\tau^n+\int_0^\tau\dd{\tau}_1 \frac{(\tau-\tau_1)^{p_0-1}}{(p_0-1)!}\Delta^{F}(\tau_1)^{(p_0)},\\
     \sum_{l \in \mathbb{Z}}\langle l| \Delta^F(\tau)|0\rangle=\sum_{n=p}^{p_0-1}\frac{1}{n!}\sum_{l\in \mathbb{Z}}\langle l|\Delta^F(0)^{(n)}|0\rangle\cdot \tau^n +\int_0^\tau\dd{\tau}_1 \frac{(\tau-\tau_1)^{p_0-1}}{(p_0-1)!}\sum_{l\in \mathbb{Z}}\langle l|\Delta^{F}(\tau_1)^{(p_0)}|0\rangle.
\end{gather*}
We remark that $e^{\tau A}Be^{-\tau A}$ can be rewritten as $e^{\tau\operatorname{ad}_A}(B)$ with the same Taylor series.
Then we can substantially simplify our calculation using the following notation 
\begin{align*}
T^F(\tau)^\dagger(\cdot)T^F(\tau)=\exp({i\tilde{\alpha}_{1}\tau\operatorname{ad}_{\tilde{H}^F_{{1}}}})\cdots\exp({i\tilde{\alpha}_{{2q-1}}\tau\operatorname{ad}_{\tilde{H}^F_{{2q-1}}}})(\cdot)=\prod^{\rightarrow}_{j'=1,\dots,2q-1}\exp({i\tilde{\alpha}_{{j^{\prime}}}\tau\operatorname{ad}_{\tilde{H}^F_{{j^{\prime}}}}})(\cdot).
\end{align*}
If the operator $(\cdot)$ is independent of time, the $n$-th order derivative of the above equation is
\begin{align*}
\hspace{-2em}\sum_{n_1+\cdots +n_{2q-1}=n} \binom{n}{n_1 \cdots n_{2q-1}} (i\tilde{\alpha}_{1}\operatorname{ad}_{\tilde{H}^F_{{1}}})^{n_1}\exp({i\tilde{\alpha}_{1}\tau\operatorname{ad}_{\tilde{H}^F_{{1}}}})\cdots(i\tilde{\alpha}_{{2q-1}}\operatorname{ad}_{\tilde{H}^F_{{2q-1}}})^{n_{2q-1}}\exp({i\tilde{\alpha}_{{2q-1}}\tau\operatorname{ad}_{\tilde{H}^F_{{2q-1}}}})(\cdot).
\end{align*}

Note that the derivative at $\tau=0$ only contains commutators since the exponential terms reduce to identity, we have:
\begin{align*}
\Delta^F(0)^{(n)}& = i^n\sum_{\gamma=1}^\Gamma\sum_{n_1+\cdots +n_{2q-1}=n} \binom{n}{n_1 \cdots n_{2q-1}}\prod_{j'=1,\dots,2q-1}^{\rightarrow}(\tilde{\alpha}_{j'}\operatorname{ad}_{\tilde{H}^F_{j'}})^{n_{j'}}(H_\gamma^{\mathrm{Add}})\\
& -i^n\sum_{j=1}^q(\beta_j+\alpha_j)\sum_{n_1+\cdots n_{2j-1}=n} \binom{n}{n_1 \cdots n_{2j-1}}\prod_{j'=1,\dots,2j-1}^{\rightarrow}(\tilde{\alpha}_{j'}\operatorname{ad}_{\tilde{H}^F_{j'}})^{n_{j'}}(H_{\gamma_j}^{\mathrm{Add}})\\
& + i^n\sum_{j=1}^q\beta_j\sum_{n_1+\cdots n_{2j-2}=n} \binom{n}{n_1 \cdots n_{2j-2}}\prod_{j'=1,\dots,2j-2}^{\rightarrow}(\tilde{\alpha}_{j'}\operatorname{ad}_{\tilde{H}^F_{j'}})^{n_{j'}}(H_{\gamma_j}^{\mathrm{Add}}).
\end{align*}
The commutator in the Floquet-Hilbert space is related to the original Hilbert space by
\begin{align*}
{\left[H_{\mathrm{LP}}, H_\gamma^{\mathrm{Add}}\right] } = \sum_m \operatorname{Add}_m \otimes\left(-i \frac{\mathrm{~d}}{\mathrm{~d} t} H_\gamma(t)\right)_m, \quad 
{\left[H_{\gamma^{\prime}}^{\mathrm{Add}}, H_\gamma^{\mathrm{Add}}\right] } =\sum_m \operatorname{Add}_m \otimes\left(\left[H_{\gamma^{\prime}}(t), H_\gamma(t)\right]\right)_m,
\end{align*}
where the subscript $m$ denotes the time-independent component in the Fourier expansion as $H_m$ in \eqref{eq:Fourier}. According to the definition of $\tilde{H}^F_{j'}$ in \eqref{eq:tilde}, 
\begin{align*}
\tilde{\alpha}_{j^{\prime}} \operatorname{ad}_{\tilde{H}_{j^{\prime}}^F} H_\gamma^{\mathrm{Add}}=\sum_m \operatorname{Add}_m \otimes\left(\tilde{\alpha}_{j^{\prime}} D_{j^{\prime}}(t) H_\gamma(t)\right)_m,\quad  
D_{j'}(t)=\begin{cases}\operatorname{ad}_{H_{\gamma_j}(t)}+i\frac{\dd}{\dd{t}} & (j^{\prime}=2 j-1) \\ i\frac{\dd}{\dd{t}} & (j^{\prime}=2 j)\end{cases}.
\end{align*}
Therefore, we can make the following substitution for terms in $\Delta^F(0)^{(n)}$:
\begin{align*}
\prod_{j'=1,\dots,2q-1}^{\rightarrow}(\tilde{\alpha}_{j'}\operatorname{ad}_{\tilde{H}^F_{j'}})^{n_{j'}}(H_\gamma^{\mathrm{Add}})\longrightarrow \sum_m\mathrm{Add_m\otimes }\left(\prod_{j'=1,\dots,2q-1}^{\rightarrow}(\tilde{\alpha}_{j'}D_{{j'}}(t))^{n_{j'}} H_\gamma(t)\right)_m.
\end{align*}
To calculate $\sum_{l\in \mathbb{Z}}\langle l|\Delta^F(0)^{(n)}|0\rangle$, we use the following equality
\begin{align*}
\sum_{l\in\mathbb{Z}}\sum_m\langle l| \mathrm{Add}_m |0\rangle \otimes H_m = \sum_{l,l_1,m} \langle l|l_1+m\rangle\langle l_1|0\rangle \otimes H_m= \sum_m H_m=H(0).
\end{align*}

Finally, the coefficients in \eqref{eq:expansion} are given by 
\begin{align*}
A_n &= \frac{1}{n!}\sum_{l\in \mathbb{Z}}\langle l|\Delta^F(0)^{(n)}|0\rangle \\
& = \frac{i^n}{n!}\sum_{\gamma=1}^\Gamma\sum_{n_1+\cdots +n_{2q-1}=n} \binom{n}{n_1 \cdots n_{2q-1}}\prod_{j'=1,\dots,2q-1}^{\rightarrow}(\tilde{\alpha}_{j'}D_{{j'}}(0))^{n_{j'}} H_\gamma(0)\\
& -\frac{i^n}{n!}\sum_{j=1}^q(\beta_j+\alpha_j)\sum_{n_1+\cdots n_{2j-1}=n} \binom{n}{n_1 \cdots n_{2j-1}}\prod_{j'=1,\dots,2j-1}^{\rightarrow}(\tilde{\alpha}_{j'}D_{{j'}}(0))^{n_{j'}} H_{\gamma_j}(0)\\
& + \frac{i^n}{n!}\sum_{j=1}^q\beta_j\sum_{n_1+\cdots n_{2j-2}=n} \binom{n}{n_1 \cdots n_{2j-2}}\prod_{j'=1,\dots,2j-2}^{\rightarrow}(\tilde{\alpha}_{j'}D_{{j'}}(0))^{n_{j'}} H_{\gamma_j}(0).
\end{align*}
\end{proof}

\subsection{Proof of \lem{remainder}}
\begin{proof}
For the remainder in the integral form, if we directly calculate the $p_0$-th order derivative at $\tau_1\neq0$, there are exponential terms interleaved between $\tilde{\alpha}_j \operatorname{ad}_{\tilde{H}^F_j}$, preventing us from expressing its norm by commutators. To address this issue, we alternatively adopt the iterative decomposition scheme in Theorem 10 of~\cite{childs2021theory}. 
By merely substituting the $p$ with $p_0$ in \eqref{eq:remainder}, we have
\begin{multline*}
    \hspace*{-1.5em}A_{p_0}(\tau)= \sum_{\gamma=1}^\Gamma\sum_{j'=1}^{2q-1}\sum_{\substack{n_1+\cdots+n_{j'}=p_0 \\ n_{j^{\prime}} \neq 0}}U_{< \lceil j'/2 \rceil}(\tau,0)^\dagger\int_0^\tau \dd{\tau}_1\frac{i^{p_0}\left(\tau-\tau_1\right)^{n_{j^{\prime}}-1} \tau^{p_0-n_{j^{\prime}}}}{\left(n_{j^{\prime}}-1\right)!n_{j^{\prime}-1}!\cdots n_{1}!} C_{j'\rightarrow 2q-1}(H_{\gamma})\cdot  U_{< \lceil j'/2 \rceil}(\tau,0)\\
- \sum_{j=1}^q(\beta_j+\alpha_j)\sum_{j'=1}^{2j-1}\sum_{\substack{n_1+\cdots+n_{j'}=p_0 \\ n_{j^{\prime}} \neq 0}}U_{< \lceil j'/2 \rceil}(\tau,0)^\dagger\int_0^\tau \dd{\tau}_1\frac{i^{p_0}\left(\tau-\tau_1\right)^{n_{j^{\prime}}-1} \tau^{p_0-n_{j^{\prime}}}}{\left(n_{j^{\prime}}-1\right)!n_{j^{\prime}-1}!\cdots n_{1}!} C_{j'\rightarrow 2j-1}(H_{\gamma_j})\cdot  U_{< \lceil j'/2 \rceil}(\tau,0)\\
+ \sum_{j=1}^q\beta_j\sum_{j'=1}^{2j-2}\sum_{\substack{n_1+\cdots+n_{j'}=p_0 \\ n_{j^{\prime}} \neq 0}}U_{< \lceil j'/2 \rceil}(\tau,0)^\dagger\int_0^\tau \dd{\tau}_1\frac{i^{p_0}\left(\tau-\tau_1\right)^{n_{j^{\prime}}-1} \tau^{p_0-n_{j^{\prime}}}}{\left(n_{j^{\prime}}-1\right)!n_{j^{\prime}-1}!\cdots n_{1}!} C_{j'\rightarrow 2j-2}(H_{\gamma_j})\cdot  U_{< \lceil j'/2 \rceil}(\tau,0),
\end{multline*}
Regarding its norm, since $C_{j'\rightarrow 2q-1}(A)$ depends on $\tau_1$, we cannot separate the operators and directly calculate $\int_{0}^\tau\dd{\tau_1}(\tau-\tau_1)^{n_j'-1}=\tfrac{1}{n_j'}\tau^{n_{j'}}$. Instead, we use the inequalities $(\tau-\tau_1)^{n_{j'}-1}\tau^{p_0-n_j'}\leq \tau^{p_0-1} $ and $\tfrac{1}{(n_{j'}-1)!}=\tfrac{n_{j'}}{n_{j'}!}\leq \tfrac{p_0}{n_{j'}!}$. Then we can take the maximum value in the final step:
\begin{align*}
\norm{A_{p_0}(\tau)}
& \leq 3\sum_{j=1}^q\sum_{j'=1}^{2q-1}\sum_{\substack{n_1+\cdots+n_{j'}=p_0 \\ n_{j^{\prime}} \neq 0}}\frac{p_0\tau^{p_0-1}}{n_{j'}!n_{j'-1}!\cdots n_1!}\int_0^\tau\dd{\tau_1}\norm{C_{j'\rightarrow 2q-1}(H_{\gamma_{j}})}\\
& \leq 3 p_0\tau^{p_0}\int_0^\tau\dd{\tau_1} \sum_{j=1}^q\sum_{j'=1}^{2q-1}\sum_{\substack{n_1+\cdots+n_{j'}=p_0 \\ n_{j^{\prime}} \neq 0}} \norm{C_{j'\rightarrow 2q-1}(H_{\gamma_{j}})}\\
& \leq 3 p_0\tau^{p_0} \max _{\tau_1 \in[0, \tau]}\left(\sum_{j=1}^{q} \sum_{j_1^{\prime}, \dots, j_{p_0}^{\prime}=1}^{2q-1}\left\|(|\tilde{\alpha}_{j_1'}|D_{j_1'}(\tau_1))\cdots(|\tilde{\alpha}_{j_{p_0}'}|D_{j_{p_0}'}(\tau_1))H_{\gamma_j}\left(\tau_1\right)\right\|\right) .
\end{align*}
We then reach a conclusion similar to the non-asymptotic Trotter error with a concrete prefactor:
\begin{align*}
\int_0^\delta\dd{\tau}\norm{A_{p_0}(\tau)}\leq 3\left(\frac{q}{\Gamma}\right)^{p_0+1} \cdot \max _{\tau \in[0, \delta]} \alpha_{\text {com }}^{p_0+1}(\tau) \cdot \delta^{p_0+1}.
\end{align*}
Further splitting the commutator into local iteration terms as in \sec{scaling}, we have
\begin{align*}
\max _{\tau_1 \in[0, \tau]}\alpha_{\text {com }}^{p_0+1}(\tau)\leq (2k(p_0+1))^{p_0}\cdot \left(\sum_{j=1}^{p_0+1}\Gamma^{p_0+1-j}g^{j}N\right).
\end{align*}
Finally, we upper bound the integral of the remainder as follows:
\begin{align*}
\int_0^\delta\dd{\tau}\norm{A_{p_0}(\tau)}\leq 3 (2q(p_0+1)kg\delta)^{p_0+1}(2k)^{-1}N.
\end{align*}
\end{proof}
\end{document}